%
\documentclass[runningheads]{llncs}
\usepackage{graphicx}
\usepackage{xcolor}
\usepackage{amssymb,amsmath}
\usepackage{wrapfig}
\usepackage{paralist}


%

\renewcommand{\Pr}{\mathbb{P}}
\newcommand{\E}{{\mathbb{E}}}
\newenvironment{sop}{\noindent{\em Sketch of proof.}}{\qed}

\begin{document}
	\title{Preferential attachment hypergraph with high modularity\thanks{\scriptsize The Appendix contains proofs, comments on the model implementation and results of further experiments with a real data.}}
	%
	%
	\author{Fr{\'e}d{\'e}ric Giroire\inst{1} \and Nicolas Nisse\inst{1}
	\and Thibaud Trolliet\inst{1} \and Ma{\l}gorzata Sulkowska\inst{1,2}}
	\authorrunning{F. Giroire et al.}
	%
	\institute{Universit{\'e} C{\^o}te d’Azur, CNRS, Inria, I3S, France
\\
		\and Wroc{\l}aw University of Science and Technology,
		Faculty of Fundamental Problems of Technology,
		Department of Fundamentals of Computer Science, Poland
}
	\maketitle              
	\begin{abstract}
		Numerous works have been proposed to generate random graphs preserving the same properties as real-life large scale networks. However, many real networks are better represented by hypergraphs. Few models for generating random hypergraphs exist and no general model allows to both preserve a power-law degree distribution and a high modularity indicating the presence of communities. We present a dynamic preferential attachment hypergraph model which features partition into communities. We prove that its degree distribution follows a power-law and we give theoretical lower bounds for its modularity. We compare its characteristics with a real-life co-authorship network and show that our model achieves good performances.
		We believe that our hypergraph model will be an interesting tool that may be used in many research domains in order to reflect better real-life phenomena.
		\keywords{\scriptsize{Complex network, Hypergraph, Preferential attachment,  Modularity.}}
	\end{abstract}
	\section{Introduction}
	
The area of complex networks concerns designing and analysing structures that model well large real-life systems. It was empirically recognised that the common ground of such structures are small diameter, high clustering coefficient, heavy tailed degree distribution and visible community structure \cite{Bol_Rio_chapter}. Surprisingly, all those characteristics appear, no matter whether we investigate biological, social, or technological systems. 
A dynamical growth in research in this field one observes roughly since 1999 when Barab{\'a}si and Albert introduced probably the most studied nowadays preferential attachment graph~\cite{BA_basic}. Their model is based on two mechanisms: growth (the graph is growing over time, gaining a new vertex and a bunch of edges at each time step) and preferential attachment (arriving vertex is more likely to attach to other vertices with high degree rather than with low degree).
It captures two out of four universal properties of real networks, which are a heavy tailed degree distribution and a small world phenomenon. 

A number of theoretical models were presented throughout last 25 years. Just to mention the mostly investigated ones: Watts and Strogatz (exhibiting small-world and high clustering properties~\cite{Watts_Strogatz}), Molloy and Reed (with a given degree sequence~\cite{Molloy_Reed}), Chung-Lu (with a given expected degree sequence~\cite{Chung_Lu_model}), Cooper-Frieze (model of web graphs~\cite{Cooper_Frieze}), Buckley-Osthus~\cite{Buck_Ost} or random intersection graph (with high clustering properties and following a power-law, \cite{rand_inters_models}). None of here mentioned graphs captures all the four properties listed in the previous paragraph, e.g., \cite{Watts_Strogatz} does not have a heavy tailed degree distribution, \cite{BA_basic} and \cite{Chung_Lu_model} models suffer from vanishing clustering coefficient \cite{Bol_Rio_chapter}, almost all of them do not exhibit visible community structure, i.e., have low \emph{modularity}. 

Modularity is a parameter measuring how clearly a network may be divided into \emph{communities}. 
It was introduced by Newman and Girvan in \cite{Newman_mod_def}. A graph has high modularity if it is possible to 
partition the set of its vertices into communities inside which the density of edges is remarkably higher than the density of edges between different communities.  Modularity is known to have some drawbacks (for thorough discussion check \cite{limits_resolution}). 
Nevertheless, today it remains a popular measure and is widely used in most common algorithms for community detection \cite{community_detection,Louv,Leiden}. It is well known that the real-life social or biological networks are highly modular \cite{Fortunato,Girvan_Newman}. 
At the same time simulations show that most of existing preferential attachment models have low modularity. Good modularity properties one finds in geometric models, like spatial preferential attachment graphs \cite{spatial,jacob_spatial}, however they use additionally a spatial metric. 

Finally, almost all the up-to-date complex networks models are graph models 
thus are able to mirror only binary relations. In practical applications $k$-ary relations (co-authorship, groups of interests or protein reactions) are often modelled in graphs by cliques which may lead to a profound information loss.

\noindent\textbf{Results.} Within this article we propose a dynamic model with high modularity by preserving a heavy tailed degree distribution and not using a spatial metric. Moreover, our model is a random hypergraph (not a graph) thus can reflect $k$-ary relations. Preferential attachment hypergraph model was first introduced by Wang et al. in \cite{Wang_hyper}. However, 
it was restricted just to a specific subfamily of uniform acyclic hypergraphs (the analogue of trees within graphs). The first rigorously studied non-uniform hypergraph preferential attachment model was proposed only in 2019 by Avin et al. \cite{ALP_hyper}. Its degree distribution follows a power-law. However, our empirical results indicate that this model has a weakness of low modularity (see Section \ref{sec:mod_exp}). To the best of our knowledge the model proposed within this article is the first dynamic non-uniform hypergraph model with degree sequence following a power-law and exhibiting clear community structure. We experimentally show that features of our model correspond to the ones of a real co-authorship network built upon Scopus database.

\noindent \textbf{Paper organisation.} Basic definitions are introduced in Sec.~\ref{sec:notation}. In Sec. \ref{sec:pa_hyper}, we present a universal preferential attachment hypergraph model which unifies many existing models (from classical Barab{\'a}si-Albert graph \cite{BA_basic} to Avin et al. preferential attachment hypergraph \cite{ALP_hyper}). In Sec. \ref{sec:sbm}, we use it as a component in a stochastic block model to build a general hypergraph with good modularity properties. Theoretical bounds for its modularity and experimental results on a real data are presented in Sec.~\ref{sec:modularity}. 
Further works are presented 
in Sec.~\ref{sec:summary}. 


	
	\section{Basic definitions and notation} \label{sec:notation}
	
We define a \emph{hypergraph} $H$ as a pair $H=(V,E)$, where $V$ is a set of vertices and $E$ is a set of hyperedges, i.e., non-empty, unordered multisets of $V$. We allow for a multiple appearance of a vertex in a hyperedge (self-loops). The degree of a vertex $v$ in a hyperedge $e$, denoted by $d(v,e)$, is the number of times $v$ appears in $e$. The cardinality of a hyperedge $e$ is  $|e| = \sum_{v \in e} d(v,e)$. The degree of a vertex $v \in V$ in $H$ is understood as the number of times it appears in all hyperedges, i.e., $\deg(v) = \sum_{e \in E} d(v,e)$. If $|e|=k$ for all $e \in E$, $H$ is said {\it $k$-uniform}.

We consider hypergraphs that grow by adding vertices and/or hyperedges at discrete time steps $t=0,1,2,\ldots$. The hypergraph obtained at time $t$ will be denoted by $H_t=(V_t,E_t)$ and the degree of $u \in V_t$ in $H_t$ by $\deg_t(u)$. By $D_t$ we denote the sum of degrees at time $t$, i.e., $D_t = \sum_{u \in V_t} \deg_t(u)$. As the hypergraph gets large, the probability of creating a self-loop can be well bounded and is quite small provided that the sizes of hyperedges are reasonably bounded.

$N_{k,t}$ stands for the number of vertices in $H_t$ of degree $k$. We say that the degree distribution of a hypergraph follows a \emph{power-law} if the fraction of vertices of degree $k$ is proportional to $k^{-\beta}$ for some exponent $\beta \geq 1$. Formally, we will interpret it as $\lim_{t \rightarrow \infty} \E\left[\frac{N_{k,t}}{|V_t|}\right] \sim c \cdot k^{-\beta}$ for some positive constant $c$ and $\beta \geq 1$. For $f$ and $g$ being real functions we write $f(k) \sim g(k)$ if $f(k)/g(k) \xrightarrow{k \rightarrow \infty} 1$.

\emph{Modularity} measures the presence of community structure in the graph.
Its definition for graphs introduced by Newman and Girvan in 2004 is given below.
\begin{definition}[\cite{Newman_mod_def}] \label{def:mod_graphs}
Let $G=(V,E)$ be a graph with at least one edge. For a partition $\mathcal{A}$ of vertices of $G$ define its modularity score on $G$ as
\[
q_{\mathcal{A}}(G) = \sum_{A \in \mathcal{A}} \left(\frac{|E(A)|}{|E|} - \left(\frac{vol(A)}{2|E|}\right)^2 \right),
\]
where $E(A)$ is the set of edges within $A$ and $vol(A) = \sum_{v \in A} deg(v)$. 
Modularity of $G$ is given by $q^*(G) = \max_{\mathcal{A}} q_{\mathcal{A}}(G)$.
\end{definition}
\noindent
Conventionally, a graph with no edges has modularity equal to $0$. The value $\sum_{A \in \mathcal{A}} \frac{|E(A)|}{|E|}$ is called an \textit{edge contribution} while $\sum_{A \in \mathcal{A}} \left(\frac{vol(A)}{2|E|}\right)^2$ is a \textit{degree tax}. A single summand of the modularity score is the difference between the fraction of edges within $A$ and the expected fraction of edges within $A$ if we considered a random multigraph on $V$ with the degree sequence given by $G$. One can observe that the value of $q^*(G)$ always falls into the interval $[0,1)$. 

Several approaches to define a modularity for hypergraphs can be found in contemporary literature. Some of them flatten a hypergraph to a graph (e.g., by replacing each hyperedge by a clique) and apply a modularity for graphs (see e.g. \cite{hyper_flat_2}). Others base on information entropy modularity \cite{hyper_entropy}. We want to stick to the classical definition from \cite{Newman_mod_def} and preserve a rich hypergraph structure, therefore we work with the definition proposed by Kami{\'n}ski et al. in \cite{Pralat_hyper_modularity}. 


\begin{definition} [\cite{Pralat_hyper_modularity}] \label{def:mod_hypergraphs}
Let $H=(V,E)$ be a hypergraph with at least one hyperedge. For $\ell \geq 1$ let $E_{\ell} \subseteq E$ denote the set of hyperedges of cardinality $\ell$. For a partition $\mathcal{A}$ of vertices of $H$ define its modularity score on $H$ as
\[
q_{\mathcal{A}}(H) = \sum_{A \in \mathcal{A}} \left(\frac{|E(A)|}{|E|} - \sum_{\ell \geq 1} \frac{|E_{\ell}|}{|E|} \cdot \left(\frac{vol(A)}{vol(V)}\right)^{\ell} \right),
\]
where $E(A)$ is the set of hyperedges within $A$ (a hyperedge is within $A$ if all its vertices are contained in $A$), $vol(A) = \sum_{v \in A} deg(v)$ and $vol(V) = \sum_{v \in V} deg(v)$. Modularity of $H$ is given by $q^*(H) = \max_{\mathcal{A}} q_{\mathcal{A}}(H)$.
\end{definition}
\noindent
A single summand of the degree tax 
is the expected number of hyperedges within $A$ if we considered a random hypergraph on $V$ with the degree sequence given by $H$ and having the same number of hyperedges of corresponding cardinalities.

We write that an event $A$ occurs \emph{with high probability} (whp) if the probability $\Pr[A]$ depends on a certain number $t$ and tends to $1$ as $t$ tends to infinity.
	
	\section{General preferential attachment hypergraph model} \label{sec:pa_hyper}

In this section we generalise a hypergraph model proposed by Avin et al. in \cite{ALP_hyper}. Model from \cite{ALP_hyper} allows for two different actions at a single time step - attaching a new vertex by a hyperedge to the existing structure or creating a new hyperedge on already existing vertices. We allow for four different events at a single time step, admit the possibility of adding more than one hyperedge at once and draw the cardinality of newly created hyperedge from more than one distribution. 
The events allowed at a single time step in our model $H_t$ are: adding an isolated vertex, adding a vertex and attaching it to the existing structure by $m$ hyperedges, adding $m$ hyperedges, or doing nothing. The last event ``doing nothing'' is included since later we put $H_t$ in a broader context of stochastic block model, where it serves as a single community. ``Doing nothing'' indicates a time slot in which nothing associated directly with $H_t$ happens but some event takes place in the other part of the whole stochastic block model.


\subsection{Model $\mathbf{H(H_0,p,Y,X,m,\gamma)}$}
General hypergraph model $H$ is characterized by six parameters. These are:
\begin{enumerate}
\item $H_0$ - initial hypergraph, seen at $t=0$;
\item $\mathbf{p} = (p_v,p_{ve},p_e)$ - vector of probabilities indicating, what are the chances that a particular type of event occurs at a single time step; we assume $p_v + p_{ve} + p_e \in (0,1]$; additionally $p_e$ is split into the sum of $r$ probabilities $p_e = p_e^{(1)} + p_e^{(2)} + \ldots + p_e^{(r)}$ which allows for adding hyperedges whose cardinalities follow different distributions;
\item $Y = (Y_0, Y_1, \ldots, Y_t, \ldots)$ - independent random variables, cardinalities of hyperedges that are added together with a vertex at a single time step;
\item $X = ((X_1^{(1)}, \ldots, X_t^{(1)}, \ldots), (X_1^{(2)}, \ldots, X_t^{(2)}, \ldots), \ldots, (X_1^{(r)}, \ldots, X_t^{(r)}, \ldots))$ - $r$ sequences of independent random variables, cardinalities of hyperedges that are added at a single time step when no new vertex is added; 
\item $m$ - number of hyperedges added at once;
\item $\gamma \geqslant 0$ - parameter appearing in the formula for the probability of choosing a particular vertex to a newly created hyperedge.
\end{enumerate}
Here is how the structure of $H = H(H_0,p,Y,X,m,\gamma)$ is being built. We start with some non-empty hypergraph $H_0$ at $t=0$. We assume for simplicity that $H_0$ consists of a hyperedge of cardinality $1$ over a single vertex. Nevertheless, all the proofs may be generalised to any initial $H_0$ having constant number of vertices and constant number of hyperedges with constant cardinalities.
`Vertices chosen from $V_t$ in proportion to degrees' means that vertices are chosen independently (possibly with repetitions) and the probability that any $u$ from $V_t$ is chosen is
\[
\Pr[u \textnormal{ is chosen}] = \frac{\deg_t(u) + \gamma}{\sum_{v \in V_t}(\deg_t(v)+\gamma)} = \frac{\deg_t(u) + \gamma}{D_t + \gamma|V_t|}.
\]
For $t \geqslant 0$ we form $H_{t+1}$ from $H_t$ choosing only one of the following events according to $\mathbf{p}$.
\begin{itemize}
\item With probability $p_v$: Add one new isolated vertex.
\item With probability $p_{ve}$: Add one vertex $v$. Draw a value $y$ being a realization of $Y_t$. Then repeat $m$ times: select $y-1$ vertices from $V_{t}$ in proportion to degrees; add a new hyperedge consisting of $v$ and $y-1$ selected vertices.
\item With probability $p_{e}^{(1)}$: Draw a value $x$ being a realization of $X_t^{(1)}$. Then repeat $m$ times: select $x$ vertices from $V_{t}$ in proportion to degrees; add a new hyperedge consisting of $x$ selected vertices.
\item $\ldots$
\item With probability $p_{e}^{(r)}$: Draw a value $x$ being a realization of $X_t^{(r)}$. Then repeat $m$ times: select $x$ vertices from $V_{t}$ in proportion to degrees; add a new hyperedge consisting of $x$ selected vertices.
\item With probability $1-(p_v+p_{ve}+p_{e})$: Do nothing.
\end{itemize}
We allow for $r$ different distributions from which one can draw the cardinality of newly created hyperedges. Later, when $H_t$ serves as a single community in the context of the whole stochastic block model, this trick allows for spanning a new hyperedge across several communities drawing vertices from each of them according to different distributions. This reflects some possible real-life applications. Think of an article authored by people from two different research centers. Our experimental observation is that it is very unlikely that the number of authors will be distributed uniformly among two centers. More often, one author represents one center, while the others are affiliated with the second one.

\subsection{Degree distribution of $\mathbf{H(H_0,p,Y,X,m,\gamma)}$}

In this section we prove that the degree distribution of $H = H(H_0,p,Y,X,m,\gamma)$ follows a power-law with $\beta > 2$. 
We assume that supports of random variables indicating cardinalities of hyperedges are bounded and their expectations are constant. This assumption is in accord with potential applications - think of co-authors, groups of interest, protein reactions, ect.

\begin{theorem} \label{thm:hyper_plaw}
Consider a hypergraph $H = H(H_0,\mathbf{p},Y,X,m,\gamma)$ for any $t>0$. Let $i \in \{1,\ldots,r\}$. Let $\E[Y_t] = \mu_0$, and $\E[X_t^{(i)}] = \mu_i$. Moreover, let $1 \leqslant Y_t < t^{1/4}$ and $1 \leqslant X_t^{(i)} < t^{1/4}$. Then the degree distribution of $H$ follows a power-law with 
$$ \beta = 2+\frac{\gamma \bar{V} + m \cdot p_{ve}}{\bar{D} - m \cdot p_{ve}},$$
where $\bar{V} = p_v+p_{ve}$ and $\bar{D} = m(p_{ve}\mu_0 + p_e^{(1)}\mu_1 + \ldots + p_e^{(r)}\mu_r)$ which are the expected number of vertices added per a single time step and the expected number of vertices that increase their degree in a single time step, respectively.
\end{theorem}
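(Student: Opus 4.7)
The plan is to follow the standard Bollob\'as--Riordan template: set up a linear inhomogeneous recurrence for the expected number of degree-$k$ vertices $\E[N_{k,t}]$, posit a stationary ansatz $\E[N_{k,t}] \sim c_k t$, and read off the exponent $\beta$ from the ratio $c_k/c_{k-1}$. As a preliminary I would establish the linear growth of the basic aggregates: a one-step analysis yields $\E[|V_{t+1}|-|V_t|] = p_v + p_{ve} = \bar{V}$, and since a hyperedge of cardinality $\ell$ contributes $\ell$ to $\sum_v \deg(v)$, also $\E[D_{t+1}-D_t] = m(p_{ve}\mu_0 + \sum_i p_e^{(i)}\mu_i) = \bar{D}$. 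Hence $\E[|V_t|] = \bar{V}\, t + O(1)$ and $\E[D_t] = \bar{D}\, t + O(1)$, so the normalizer $D_t + \gamma|V_t|$ appearing in the preferential-attachment rule concentrates around $(\bar{D} + \gamma \bar{V})\,t$; this is made rigorous by applying Azuma--Hoeffding to the Doob martingale, using the deterministic per-step bound $O(m t^{1/4})$ on increments.

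Next I would compute $\E[N_{k,t+1}-N_{k,t} \mid H_t]$. Conditioning on the event type chosen according to $\mathbf{p}$, the expected number of \emph{old}-vertex slots filled in one step is $m(p_{ve}(\mu_0-1) + \sum_i p_e^{(i)}\mu_i) = \bar{D} - m p_{ve}$ (the ``$-m p_{ve}$'' removes the slot consumed by the freshly born vertex in the $p_{ve}$-event). Because $X_t^{(i)}, Y_t < t^{1/4}$, the probability that a fixed vertex is picked more than once in a single step is $O(t^{-3/2})$ and can be absorbed into an $o(1)$ error. A given degree-$k$ vertex is therefore promoted to degree $k+1$ with probability
\[
\lambda_k \;=\; \frac{(\bar{D} - m p_{ve})(k+\gamma)}{(\bar{D} + \gamma \bar{V})\,t} \;+\; o(t^{-1}).
\]
Adding the source terms $p_v$ for $k=0$ (birth of an isolated vertex) and $p_{ve}$ for $k=m$ (birth of a vertex attached via $m$ hyperedges) gives the master recursion
\[
\E[N_{k,t+1}] \;=\; \E[N_{k,t}] + \lambda_{k-1}\,\E[N_{k-1,t}] - \lambda_k\,\E[N_{k,t}] + s_k + o(1).
\]
Substituting the ansatz $\E[N_{k,t}] = c_k t + o(t)$ and matching leading-order coefficients yields, for $k>m$, the first-order recurrence $c_k[1+\rho(k+\gamma)] = c_{k-1}\,\rho(k-1+\gamma)$ with $\rho := (\bar{D} - m p_{ve})/(\bar{D} + \gamma \bar{V})$. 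Writing $\alpha = 1/\rho$ this reduces to $c_k/c_{k-1} = (k-1+\gamma)/(k+\gamma+\alpha)$, a telescoping ratio of shifted Gamma functions, so $c_k \sim C\,k^{-(\alpha+1)}$; one then checks that $\alpha+1 = 2 + (\gamma\bar{V} + m p_{ve})/(\bar{D} - m p_{ve}) = \beta$, matching the claim.

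The main technical obstacle is the passage from the exact random recursion to its deterministic linearization. Two points demand care. First, the random denominator $D_t + \gamma|V_t|$ must be replaced by $(\bar{D} + \gamma \bar{V})\,t$ quantitatively; this is where the hypothesis $X_t^{(i)}, Y_t < t^{1/4}$ is used, as it supplies the bounded-difference input for an Azuma deviation of order $t^{3/4}\sqrt{\log t}$, negligible against the linear mean. Second, the $m$ hyperedges created in a single step produce up to $O(m t^{1/4})$ preferential draws that are only conditionally independent, and one must verify that the joint probability of two of them hitting the same vertex is $O(t^{-3/2})$, so that after summing over the $O(t)$ candidate vertices the discrepancy between ``at least one promotion'' and its union-bound linearization is still $o(1)$ per step. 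Once these two error estimates are in hand, the recurrence for $c_k$ closes rigorously and the $k^{-\beta}$ asymptotic follows from the classical Gamma-function estimate $\Gamma(k+a)/\Gamma(k+b) \sim k^{a-b}$.
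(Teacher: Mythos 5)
Your proposal follows essentially the same route as the paper's proof: the same master equation for $\E[N_{k,t}]$ with source terms $p_v$ at $k=0$ and $p_{ve}$ at $k=m$, the same concentration of the normalizer $W_t=D_t+\gamma|V_t|$ around $(\bar D+\gamma\bar V)t$ using the $t^{1/4}$ bound on hyperedge sizes, the same $O(t^{-1/2})$ control of multiple selections of a single vertex, and the same Gamma-function telescoping yielding $\beta=1+\frac{\bar D+\gamma\bar V}{\bar D-mp_{ve}}$. The one point where the paper is more careful is that instead of positing the ansatz $\E[N_{k,t}]\sim c_k t$ and matching coefficients, it invokes a standard lemma on recurrences of the form $a_{t+1}=(1-b_t/t)a_t+c_t$ to prove by induction on $k$ that the limits $\lim_t \E[N_{k,t}]/t$ actually exist; you would need that (routine) step to close your argument.
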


\begin{sop}
We prove that $\lim_{t \rightarrow \infty} \frac{\E[N_{k,t}]}{|V_t|} \sim \tilde{c} k^{-\beta}$ (determining the exact constant $\tilde{c}$). For this purpose we first show that it is sufficient to prove that $\lim_{t \rightarrow \infty} \frac{\E[N_{k,t}]}{t} \sim c k^{-\beta}$ (Lemma~\ref{lemma:Nkt_limit} in the Appendix). Let ${\mathcal{F}}_t$ be the $\sigma$-algebra associated with the probability space at time $t$. Let $Q_{d,k,t}$ denote the probability that a specific vertex of degree $k$ was chosen $d$ times to be included in new hyperedges at time $t$. 
Moreover, let $Z_t$ be the random variable chosen at step $t$ among $Y_t, X_t^{(1)}, \ldots, X_t^{(r)}$ according to $(p_v,p_{ve},p_e^{(1)},\ldots,p_e^{(r)})$.
For $t \geqslant 1$ we get that $\E[N_{0,t} |{\mathcal{F}}_{t-1}] = p_v + N_{0,t-1} Q_{0,0,t}$ 	and when $k \geqslant 1$:
	\[
		\E[N_{k,t} |{\mathcal{F}}_{t-1}]  = \delta_{k,m} p_{ve} + \sum_{i=0}^{\min\{k,mZ_t\}} N_{k-i,t-1} Q_{i,k-i,t},
	\]
	where $\delta_{k,m}$ is the Kronecker delta. The proof then follows from the tedious analysis of this recursive equation.
\end{sop}




Below we present a bunch of examples showing that our theorem generalises the results for the degree distribution of well known models.

\begin{example}[Barab{\'a}si-Albert graph model, \cite{BA_basic}]
In a single time step we always add one new vertex and attach it with $m$ edges \textnormal{(}in proportion to degrees\textnormal{)} to existing structure. Thus $p_v = 0$, $p_{ve}=1$, $p_e=0$, $\bar{V}=1$, $Y_t=2$, $\bar{D}=2m$, $\gamma=0$ and we get $\beta = 2+\frac{m}{2m-m} = 3.$
\end{example}

\begin{example}[Chung-Lu graph model, \cite{ChLu_book}]
In a single time step: we either (with probability $p$) add one new vertex and attach it with an edge (in proportion to degrees) to existing structure; otherwise we just add an edge \textnormal{(}in proportion to degrees\textnormal{)} to existing structure. Thus $p_v = 0$, $p_{ve}=p$, $p_e=1-p$, $\bar{V}=p$, $Y_t=2$, $r=1$, $X_t^{(1)}=2$, $\bar{D}=2$, $m=1$, $\gamma=0$ and we get $\beta = 2+\frac{p}{2-p}$.
\end{example}

%
%
\begin{example}[Avin et al. hypergraph model, \cite{ALP_hyper}]
In a single time step we either (with probability $p$) add one new vertex and attach it with a hyperedge of cardinality $Y_t$ \textnormal{(}in proportion to degrees\textnormal{)} to existing structure; otherwise we just add a hyperedge of cardinality $Y_t$ 
to existing structure. The assumptions on $Y_t$ and the sum of degrees $D_t$ are:
\begin{inparaenum}
\item $\lim_{t \rightarrow \infty} \frac{\E[D_{t-1}]/t}{\E[Y_t]-p_{ve}} = D \in (0,\infty)$,
\item $\E[|\frac{1}{D_t}-\frac{1}{\E[D_t]}|] = o(1/t)$,
\item $\E\left[\frac{Y_t^2}{D_{t-1}^2}\right] = o(1/t)$.
\end{inparaenum}
The result from \cite{ALP_hyper} states that the degree distribution of the resulting hypergraph follows a power-law with $\beta = 1 + D$. Note that in our model $\lim_{t \rightarrow \infty} \frac{\E[D_{t-1}]/t}{\E[Y_t]-p_{ve}} = \frac{\bar{D}}{\bar{D}-p_{ve}}$. Setting $p_v = 0$, $p_{ve}=p$, $p_e=1-p$, $\bar{V}=p$, $m=1$, $\gamma=0$ we get $\beta = 2+\frac{p_{ve}}{\bar{D}-p_{ve}} = 1 + \frac{\bar{D}}{\bar{D}-p_{ve}} = 1 + D$.
\end{example}

\begin{remark}
Even though our result from this section may seem similar to what was obtained by Avin et al., it is easy to indicate cases that are covered by our model but not by the one from \cite{ALP_hyper} and vice versa. Indeed, the model from \cite{ALP_hyper} admits a wide range of distributions for $Y_t$. In particular, as authors underline, three mentioned assumptions hold for $Y_t$ which is polynomial in $t$. This is the case not covered by our model (we upper bound $Y_t$ by $t^{1/4}$) but we also can not think of real-life examples that would require bigger hyperedges. Whereas we can think of some natural examples that break requirements from \cite{ALP_hyper} but are admissible in our model. Put $Y_t = 2$ if $t$ is odd and $Y_t = 3$ if $t$ is even. 
Then $\lim\limits_{\substack{t \rightarrow \infty \\ t \textnormal{ - even}}} \frac{\E[D_{t-1}]/t}{\E[Y_t]-p_{ve}} = \frac{5/2}{3-p_{ve}}$ and $\lim\limits_{\substack{t \rightarrow \infty \\ t \textnormal{ - odd}}} \frac{\E[D_{t-1}]/t}{\E[Y_t]-p_{ve}} = \frac{5/2}{2-p_{ve}}$
thus the limit $\lim_{t \rightarrow \infty} \frac{\E[D_{t-1}]/t}{\E[Y_t]-p_{ve}}$ does not exist. Whereas in our model we are allowed to put $r=2$, $p_e^{(1)}=p_e^{(2)}=1/2$, $X_t^{(1)} = 2$, $X_t^{(2)} = 3$ which  probabilistically simulates stated example.
\end{remark}

		\section{Hypergraph model with high modularity} \label{sec:sbm}

In this section we present a new preferential attachment hypergraph model which features partition into communities. To the best of our knowledge no mathematical model so far consolidated preferential attachment, possibility of having hyperedges and clear community structure. We prove that its degree distribution follows  a power-law. We denote our hypergraph by $G_t=(V_t,E_t)$. At each time step either a new vertex (\emph{vertex-step}) or a new hyperedge (\emph{hyperedge-step}) is added to the existing structure. The set of vertices of $G_t$ is partitioned into $r$ communities $V_t = C_t^{(1)} \mathbin{\dot{\cup}} C_t^{(2)} \mathbin{\dot{\cup}} \ldots \mathbin{\dot{\cup}} C_t^{(r)}$. Whenever a new vertex is added to $G_t$ it is assigned to the one of $r$ communities and stays there forever. 

\subsection{Model $\mathbf{G(G_0,p,M,X,P,\gamma)}$}

Hypergraph model $G$ is characterized by six parameters:
\begin{enumerate}
\item $G_0$ - initial hypergraph seen at time $t=0$ with vertices partitioned into $r$ communities $V_0 = C_{0}^{(1)} \mathbin{\dot{\cup}} C_{0}^{(2)} \mathbin{\dot{\cup}} \ldots \mathbin{\dot{\cup}} C_{0}^{(r)}$;
\item $p \in (0,1)$ - the probability of taking a vertex-step;
\item vector $M=(m_1, m_2, \ldots, m_r)$ with all $m_i$ positive, constant and summing up to $1$; $m_i$ is the probability that a randomly chosen vertex belongs to $C_t^{(i)}$; 
\item $d$-dimensional matrix $P_{r \times \ldots \times r}$ of hyperedge probabilities ($P_{i_1,i_2,\ldots,i_d}$ is the probability that communities $i_1, \ldots, i_d$ share a hyperedge); $d$ is the upper bound for the number of communities shared by a single hyperedge; 
\item $X = ((X_0^{(1)}, X_1^{(1)}, \ldots), (X_0^{(2)}, X_1^{(2)}, \ldots), \ldots, (X_0^{(d)}, X_1^{(d)}, \ldots))$ - $d$ sequences of independent random variables indicating the number of vertices from a particular community involved in a newly created hyperedge; 
\item $\gamma \geqslant 0$ - parameter appearing in the formula for the probability of choosing a particular vertex to a newly created hyperedge. 
\end{enumerate}

We build a structure of $G(G_0,p,M,X,P,\gamma)$ starting with some initial hypergraph $G_0$. Here $G_0$ consists of $r$ disjoint hyperedges of cardinality $1$. All vertices are assigned to different communities. `Vertices are chosen from $C_{t}^{(i)}$ in proportion to degrees' means that vertices are chosen independently (possibly with repetitions) and the probability that any $u$ from $C_{t}^{(i)}$ is chosen equals
\[
\Pr[u \textnormal{ is chosen}] = \frac{\deg_t(u) + \gamma}{\sum_{v \in C_{t}^{(i)}}{(\deg_t(v)+\gamma)}},
\]
($\deg_t(v)$ is the degree of $v$ in $G_t$). For $t \geqslant 0$, $G_{t+1}$ is obtained from $G_t$ as follows:
\begin{itemize}
\item With probability $p$ add one new isolated vertex and assign it to one of $r$ communities according to a categorical distribution given by vector $M$. 
\item Otherwise, create a hyperedge:
	\begin{itemize}
	\item according to $P$ select $N$ communities ($N$ is a random variable depending on $P$) that will share a hyperedge being created, say $C_{t}^{(i_1)}$, $C_{t}^{(i_2)}, \ldots, C_{t}^{(i_N)}$;
	\item assign selected communities to $N$ random variables chosen from $\{X_t^{(1)}$, $\ldots, X_t^{(r)}\}$ uniformly independently at random, say to $X_{t}^{(j_1)}$, $\ldots, X_{t}^{(j_N)}$;
	\item for each $s \in \{1,\ldots,N\}$ select $X_{t}^{(j_s)}$ vertices from $C_{t}^{(i_s)}$ in proportion to degrees;
	\item create a hyperedge consisting of all selected vertices.
	\end{itemize}
\end{itemize}


\subsection{Degree distribution of $\mathbf{G(G_0,p,M,X,P,\gamma)}$}

A power-law degree distribution of $G$ comes from the fact that each community of $G$ behaves over time as the hypergraph model $H$ presented in previous section. Thus the degree distribution of each community follows a power-law. 
For a detailed proof and experimental results on the degree distribution of a real-life co-authorship network check the Appendix.

\begin{theorem} \label{thm:degreed_G}
Consider a hypergraph $G = G(G_0,p,M,X,P,\gamma)$ for all $t>0$. Let $\E[X_t^{(i)}] = \mu_i$ and $1 \leqslant X_t^{(i)} < t^{1/4}$ for $i \in \{0,1,\ldots,r\}$. Then the degree distribution of $G$ follows a power-law with $\beta = 2 + \gamma \cdot \min_{j\in\{1,\ldots,r\}} \left\{ \bar{V}_j/\bar{D}_j \right\}$, where $\bar{V}_j$ is the expected number of vertices added to $C_{t}^{(j)}$ at a single time step and $\bar{D}_j$ is the expected number of vertices from $C_{t}^{(j)}$ that increase their degree at a single time step. I.e.,
\[
\beta = 2 + \frac{\gamma p }{(1-p)\frac{\mu_1 + \ldots + \mu_r}{r}} \cdot \min_{j\in\{1,\ldots,r\}} \left\{\frac{m_j}{s_j} \right\},
\]
where $s_j$ is the probability that by creating a new hyperedge a community $j$ is chosen as the one sharing it. 
\end{theorem}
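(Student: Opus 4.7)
The plan is to reduce the statement to a community-by-community application of Theorem~\ref{thm:hyper_plaw}. For each $j \in \{1, \ldots, r\}$ I would track the history of community $C_t^{(j)}$ in isolation: which vertices are added to $C_t^{(j)}$, and which hyperedges pick vertices from $C_t^{(j)}$. Because preferential attachment inside a given community in $G$ uses only $\sum_{v \in C_t^{(j)}}(\deg_t(v)+\gamma)$, and because the degree of a vertex $v \in C_t^{(j)}$ in $G_t$ equals the number of its selections across all hyperedges touching $C_t^{(j)}$, the restricted process is itself an instance of the $H$-model of Section~\ref{sec:pa_hyper}.

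Next I would identify the parameters of this equivalent $H$-process. A new vertex joins $C_t^{(j)}$ exactly when a vertex-step occurs and the categorical draw selects $j$, so $p_v = p\, m_j$; the $G$-process never attaches a new vertex by a hyperedge, so $p_{ve}=0$; community $j$ is involved in a hyperedge-step with probability $(1-p)s_j$, and the cardinality of its contribution is drawn from $X_t^{(i)}$ for an $i$ chosen uniformly among the $r$ available distributions, so $p_e^{(i)} = (1-p)s_j/r$ for each $i$; finally $m=1$ and $\gamma$ is inherited. The bound $X_t^{(i)} < t^{1/4}$ is inherited from the hypothesis. Plugging into Theorem~\ref{thm:hyper_plaw} (using $p_{ve}=0$) yields
\[
\beta_j \;=\; 2 + \frac{\gamma\, \bar V_j}{\bar D_j}, \qquad \bar V_j = p\, m_j, \qquad \bar D_j = (1-p)\, s_j\, \frac{\mu_1+\ldots+\mu_r}{r},
\]
and $\lim_{t\to\infty} \E[N_{k,t}^{(j)}]/|C_t^{(j)}| \sim c_j\, k^{-\beta_j}$ for a positive constant $c_j$, where $N_{k,t}^{(j)}$ counts degree-$k$ vertices of $C_t^{(j)}$.

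To assemble the global distribution, I would observe that $|C_t^{(j)}|/|V_t| \to m_j$ whp by a Chernoff-type concentration argument for the categorical draws. Writing $N_{k,t} = \sum_{j=1}^{r} N_{k,t}^{(j)}$ then gives
\[
\lim_{t \to \infty} \E\!\left[\frac{N_{k,t}}{|V_t|}\right] \;\sim\; \sum_{j=1}^{r} m_j\, c_j\, k^{-\beta_j},
\]
which, as $k \to \infty$, is dominated by the summand with the smallest exponent. Hence $\beta = \min_{j} \beta_j$, matching the claimed expression since $\gamma \bar V_j/\bar D_j = \frac{\gamma p}{(1-p)(\mu_1+\ldots+\mu_r)/r}\cdot\frac{m_j}{s_j}$.

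The main obstacle is the first step: formally verifying that the community-restricted process is genuinely an instance of the $H$-model. Concretely, one must check that the recursive identities for $\E[N_{k,t}^{(j)} \mid \mathcal{F}_{t-1}]$ driving the proof of Theorem~\ref{thm:hyper_plaw} continue to hold under the extra layers of randomness in $G$: the categorical community assignment, the selection by $P$ of which communities share a given hyperedge, and the uniform matching of the $X_t^{(\cdot)}$'s to those communities. Conditioning first on the $P$-draw and then on the uniform permutation reduces each conditional expectation to the mixture-of-$p_e^{(i)}$'s form already handled by Theorem~\ref{thm:hyper_plaw}. A secondary, more routine technicality is to propagate the $c_j$-asymptotics uniformly in $k$ through the random community sizes, which follows from concentration of $|C_t^{(j)}|$ around $m_j|V_t|$.
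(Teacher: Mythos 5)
Your proposal follows essentially the same route as the paper: the paper's Lemma~\ref{lemma:single_com} likewise views each community $C_t^{(j)}$ as an instance of the $H$-model with $p_v = p\,m_j$, $p_{ve}=0$, $p_e^{(i)}=(1-p)s_j/r$, $m=1$, applies Theorem~\ref{thm:hyper_plaw} to get $\beta_j = 2+\gamma\bar V_j/\bar D_j$, and then sums $N_{k,t}=\sum_j N_{k,t}^{(j)}$ so that the smallest exponent dominates, with the normalization by $|V_t|$ handled via the concentration argument of Lemma~\ref{lemma:Nkt_limit}. The proposal is correct and no further comparison is needed.
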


\begin{remark}
	The value $s_j$ can be derived from $P$; it is the sum of probabilities of creating a hyperedge between $C^{(j)}$ and any other subset of communities. 
\end{remark}

	\section{Modularity of $\mathbf{G(G_0,p,M,X,P,\gamma)}$} \label{sec:modularity}
	
In this section we give lower bounds for the modularity of $G = G(G_0,p,M,X,P,\gamma)$ in terms of the values from matrix $P$. We present experimental results showing the advantage in modularity of our model over the one in~\cite{ALP_hyper}.




\subsection{Theoretical results}

We analyse $G(G_0,p,M,X,P,\gamma)=(V,E)$ obtained up to time $t$ (this time we omit superscripts $t$). Recall that each vertex from $V$ is assigned to one of $r$ communities, $V = C^{(1)} \mathbin{\dot{\cup}} C^{(2)} \mathbin{\dot{\cup}} \ldots \mathbin{\dot{\cup}} C^{(r)}$. We obtain the lower bound for modularity deriving the modularity score of the partition $\mathcal{C} = \{C^{(1)}, C^{(2)}, \ldots, C^{(r)}\}$. This choice of partition seems obvious provided that matrix $P$ is strongly assortative, i.e., the probabilities of having an edge inside communities are all bigger than the highest probability of having an edge joining different communities. Note that what matters for the value of modularity is the total sum of degrees in each community, not the distribution of degrees. Therefore we do not use the fact that the degree distribution follows a power-law in each community and in the whole model. 
We just use information from matrix $P$. Thus, in fact, we derive the lower bound for the modularity of stochastic block model with $r$ communities.

For $\ell \geqslant 1$ $E_{\ell} \subseteq E$ is the set of hyperedges of cardinality $\ell$. First, we state general lower bound for the modularity of $G$ ias a function of matrix $P$.

\begin{lemma} \label{lemma:mod_hyper_gen}
Let $G = G(G_0,p,M,X,P,\gamma)$ 
with the size of each hyperedge bounded by $d$. 
Let $p_i$ be the probability that a randomly chosen hyperedge is within community $C^{(i)}$ \textnormal{(}i.e., all vertices of a hyperedge belong to $C^{(i)}$\textnormal{)}. By $s_i$ we denote the probability that a randomly chosen hyperedge has at least one vertex in community $C^{(i)}$.  Assume also that with high probability $|E_{\ell}|/|E| \sim a_{\ell}$ for some constants $a_{\ell} \in [0,1]$ and $vol(V)/|E| \sim \delta$ for some constant $\delta \in (0,\infty)$. Then whp
\[
\lim_{t \rightarrow \infty} q^*(G) \geqslant \sum_{i=1}^{r} p_i - \sum_{i=1}^{r} \sum_{\ell \geqslant 1} a_{\ell}\left(\frac{(d-1)s_i+p_i}{\delta}\right)^{\ell}.
\]
\end{lemma}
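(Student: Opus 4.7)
The plan is to use the natural partition $\mathcal{C} = \{C^{(1)}, \ldots, C^{(r)}\}$ as a candidate in Definition~\ref{def:mod_hypergraphs}, which immediately gives $q^*(G) \geq q_{\mathcal{C}}(G)$. I would then estimate the edge contribution and the degree tax separately and pass to the limit $t \to \infty$.

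For the edge contribution, by the construction in Section~\ref{sec:sbm} each hyperedge is, independently of the others, entirely contained in $C^{(i)}$ with probability $p_i$ (this probability is fully determined by the matrix $P$, the vector $M$, and the sizes $X_t^{(j)}$). A Chernoff or Azuma--Hoeffding argument on the sequence of hyperedge-steps then yields $|E(C^{(i)})|/|E| \to p_i$ whp, so the edge contribution tends to $\sum_{i=1}^{r} p_i$.

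The heart of the argument is an upper bound on $vol(C^{(i)})/vol(V)$. Writing $vol(C^{(i)}) = \sum_{e \in E} |e \cap C^{(i)}|$ (with multiplicity), I would split $E$ into three classes: hyperedges entirely inside $C^{(i)}$, hyperedges touching $C^{(i)}$ but containing at least one vertex outside it, and hyperedges disjoint from $C^{(i)}$. Since every hyperedge has size at most $d$, the first class contributes at most $d$ per hyperedge, the second at most $d-1$ (at least one vertex lies outside $C^{(i)}$), and the third contributes $0$. The same concentration argument used for the edge contribution shows that the first class has size concentrated around $p_i |E|$ and the edges touching $C^{(i)}$ around $s_i |E|$, so the second class has size concentrated around $(s_i - p_i)|E|$. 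Combining,
\[
vol(C^{(i)}) \leq \bigl(d\, p_i + (d-1)(s_i - p_i)\bigr)|E| + o(|E|) = \bigl((d-1)s_i + p_i\bigr)|E| + o(|E|) \quad \text{whp,}
\]
and the assumption $vol(V)/|E| \sim \delta$ then yields $vol(C^{(i)})/vol(V) \leq ((d-1)s_i + p_i)/\delta + o(1)$ whp.

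Substituting these estimates together with the hypothesis $|E_\ell|/|E| \sim a_\ell$ into Definition~\ref{def:mod_hypergraphs} and taking $t \to \infty$ produces the claimed inequality. The main obstacle I expect is to justify rigorously the concentration of $|E(C^{(i)})|/|E|$ and of the fraction of hyperedges touching $C^{(i)}$ around their means $p_i$ and $s_i$; this should follow from a Doob martingale / Azuma--Hoeffding argument in the spirit of the concentration step behind Theorem~\ref{thm:degreed_G}, once one conditions on the (whp linear in $t$) total number of hyperedge-steps. The rest of the reasoning is a deterministic, per-hyperedge counting argument together with standard manipulations of the modularity formula.
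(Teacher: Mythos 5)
Your proposal is correct and follows essentially the same route as the paper: lower-bound $q^*(G)$ by the score of the natural partition $\mathcal{C}$, show the edge contribution concentrates at $\sum_i p_i$, and bound $vol(C^{(i)})$ by counting at most $d$ per hyperedge fully inside $C^{(i)}$ and at most $d-1$ per hyperedge that merely touches it, giving $vol(C^{(i)}) \leq ((d-1)s_i + p_i)|E|$ up to lower-order terms. The paper's proof is in fact slightly terser on the concentration step than your write-up, so no gap to report.
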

\begin{remark}
Note that for $G$ being $2$-uniform (thus simply a graph) this result simplifies significantly to $\lim_{t \rightarrow \infty} q^*(G) \geqslant  \sum_{i=1}^{r} p_{i} - 1/4 \sum_{i=1}^{r}(s_i+p_{i})^2$.
\end{remark}

Below we state the lower bound for the modularity of $G$ in a version in which the knowledge of the whole matrix $P$ is not necessary. Instead we use its two characteristics: $\alpha$ - the probability that a randomly chosen hyperedge joins at least two different communities (may be interpreted as the amount of noise in the network) and $\beta$ - the maximum value among $p_{i}$'s for $i \in \{1,2,\ldots,r\}$. The modularity of the model will be maximised for $\alpha=0$ (when there are no hyperedges joining different communities) and $\beta = 1/r$ (when all $p_{i}$'s are equal to $1/r$ thus hyperedges are distributed uniformly across communities).

\begin{lemma} \label{lemma:mod_hyper_ab}
By assumptions from Lemma \ref{lemma:mod_hyper_gen} whp $ \lim_{t \rightarrow \infty} q^*(G) \geqslant$ \\ $1-\alpha - a_1 \left(\frac{d}{\delta}\right)((d-2)\alpha+1) - \sum_{\ell \geqslant 2} a_{\ell}\left(\frac{d}{\delta}\right)^{\ell}\left((r-1)\beta^{\ell} + ((d-1)\alpha + \beta)^{\ell}\right)$,\\
where $\alpha = 1 - \sum_{i=1}^{r}p_{i}$ and $\beta = \max_{i \in \{1,\ldots,r\}} p_{i}$.

\end{lemma}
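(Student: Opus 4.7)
The plan is to use Lemma~\ref{lemma:mod_hyper_gen} as a starting point and replace the per-community quantities $p_i, s_i$ with bounds expressible in terms of $\alpha$ and $\beta$ alone. Setting $u_i := s_i - p_i$, the probability that a random hyperedge is crossing yet meets community $C^{(i)}$, we have the identity $(d-1)s_i + p_i = d p_i + (d-1) u_i$. Since $\sum_i p_i = 1 - \alpha$, the edge contribution already yields the leading $1 - \alpha$ summand of the claim, and what remains is to upper bound $\sum_i (d p_i + (d-1) u_i)^\ell$ for each $\ell \geq 1$ in terms of $\alpha$, $\beta$, $r$ and $d$.

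First I would handle $\ell = 1$. Each hyperedge has cardinality at most $d$, hence meets at most $d$ communities, so $\sum_i u_i \leq d\alpha$. Then
\[
\sum_i \bigl(d p_i + (d-1) u_i\bigr) = d(1-\alpha) + (d-1) \sum_i u_i \leq d\bigl(1 + (d-2)\alpha\bigr),
\]
which after division by $\delta$ produces exactly the $a_1 (d/\delta) ((d-2)\alpha + 1)$ part of the claim.

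For $\ell \geq 2$, I would argue by convex maximization. Fix the probabilities $p_i$ and let $u = (u_i)$ range over the polytope $\mathcal{P} = \{u \geq 0 : \sum_i u_i \leq d\alpha\}$, which relaxes the true feasible region (in particular we drop the tighter bound $u_i \leq \alpha$). The function $u \mapsto \sum_i (d p_i + (d-1) u_i)^\ell$ is convex in $u$, so its maximum on $\mathcal{P}$ is attained at a vertex, namely either $u = 0$ or all $d\alpha$ units of mass concentrated on a single coordinate $j$. Concentrating on $j = i^\star := \arg\max_i p_i$ dominates every other choice, yielding
\[
\sum_i \bigl((d-1) s_i + p_i\bigr)^\ell \leq d^\ell (\beta + (d-1)\alpha)^\ell + d^\ell \!\!\sum_{i \neq i^\star}\!\! p_i^\ell \leq d^\ell \Bigl( ((d-1)\alpha + \beta)^\ell + (r-1)\beta^\ell \Bigr),
\]
where the last step uses $p_i \leq \beta$ for every $i$. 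Dividing by $\delta^\ell$ and weighting by $a_\ell$ gives the $\ell \geq 2$ part of the stated bound.

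The hard part will be the $\ell \geq 2$ convex-maximization step: I must verify that $\sum_i u_i \leq d\alpha$ is indeed the binding constraint (dropping the individual caps $u_i \leq \alpha$ is harmless for an upper bound but should be noted), and confirm that among the vertices of $\mathcal{P}$ the concentration at $i^\star$ dominates the null vertex (immediate since $(\beta + (d-1)\alpha)^\ell \geq \beta^\ell$) and every other concentration (since $p_j \leq \beta = p_{i^\star}$). Passing to the whp limit is then routine, invoking only the convergences $|E_\ell|/|E| \sim a_\ell$ and $\mathrm{vol}(V)/|E| \sim \delta$ already assumed in Lemma~\ref{lemma:mod_hyper_gen}.
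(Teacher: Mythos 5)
Your proposal is correct and reaches exactly the stated bound, but the key step for $\ell \geq 2$ is argued differently from the paper. Both proofs start identically: introduce $u_i = s_i - p_i$ (the paper calls it $\tilde{s}_i$), rewrite $(d-1)s_i + p_i = dp_i + (d-1)u_i$, observe $\sum_i u_i \leq d\alpha$ because a crossing hyperedge meets at most $d$ communities, and dispatch the $\ell = 1$ term by linearity. For $\ell \geq 2$ the paper expands $\bigl(dp_i + (d-1)u_i\bigr)^{\ell}$ via the binomial theorem, bounds $p_i^{\ell-k} \leq \beta^{\ell-k}$, uses $\sum_i u_i^k \leq \bigl(\sum_i u_i\bigr)^k \leq (d\alpha)^k$ for $k \geq 1$ while keeping the $k=0$ term as $r(d\beta)^{\ell}$, and then re-sums the binomial series into $(r-1)(d\beta)^{\ell} + \bigl((d-1)d\alpha + d\beta\bigr)^{\ell}$. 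You instead fix the $p_i$, relax the feasible set of $(u_i)$ to the simplex $\{u \geq 0 : \sum_i u_i \leq d\alpha\}$, and maximize the convex function $u \mapsto \sum_i \bigl(dp_i + (d-1)u_i\bigr)^{\ell}$ over its vertices; the only point needing care, which you flag, is that concentration at $i^{\star} = \arg\max_i p_i$ dominates concentration at any other $j$, and this follows because $x \mapsto (dx + c)^{\ell} - (dx)^{\ell}$ is nondecreasing for $c \geq 0$. The two routes produce the same constants; your optimization argument is arguably more transparent about \emph{why} the extremal configuration is ``all crossing mass on the heaviest community,'' while the paper's expansion is more mechanical but avoids any appeal to convexity. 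No gap.
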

\begin{remark}
	For $G$ being $2$-uniform, 
	the result simplifies to $\lim_{t \rightarrow \infty} q^*(G) \geqslant 1 - r \beta^2 - \alpha(1 + \alpha + 2\beta)$. Note that for $\alpha=0$ and $\beta=1/r$, this bound equals $1-1/r$ and is tight, i.e., it is the modularity of the graph with the same number of edges in each of its $r$ communities and no edges between different communities. 
\end{remark}
\begin{remark}
	Obtained bounds work well as long as the cardinalities of hyperedges do not differ too much. This is since deriving them we bound the cardinality of each hyperedge by the size of the biggest one. In particular, the bounds are very good in case of uniform hypergraphs - check experimental results below.
\end{remark}

\subsection{Experimental results} \label{sec:mod_exp}

In this subsection we show  how the modularity of our model $G$ compares with Avin et al. hypergraph $A$ \cite{ALP_hyper} and with a real-life co-authorship graph $R$. We also check how good is our theoretical lower bound for modularity.

\vspace{-0.2cm}
\begin{figure}[ht]
	\begin{minipage}[b]{0.475\textwidth}
		\includegraphics[width=0.9\textwidth]{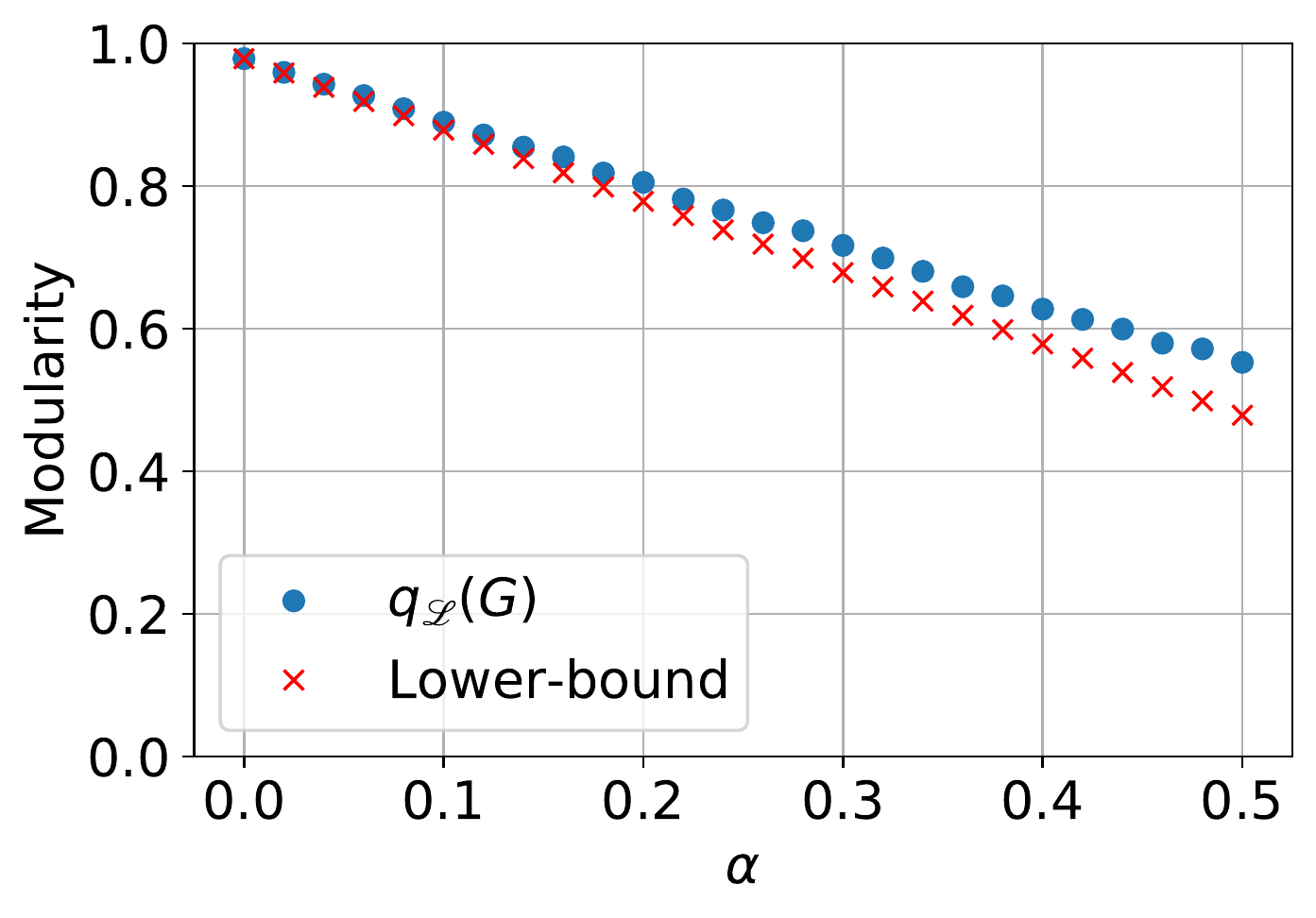}
		\vspace{4pt}
		\centering
		\vspace{-10pt}
		{\small $2$-uniform $G$}
	\end{minipage}
	\begin{minipage}[b]{0.475\textwidth}
		\includegraphics[width=0.9\textwidth]{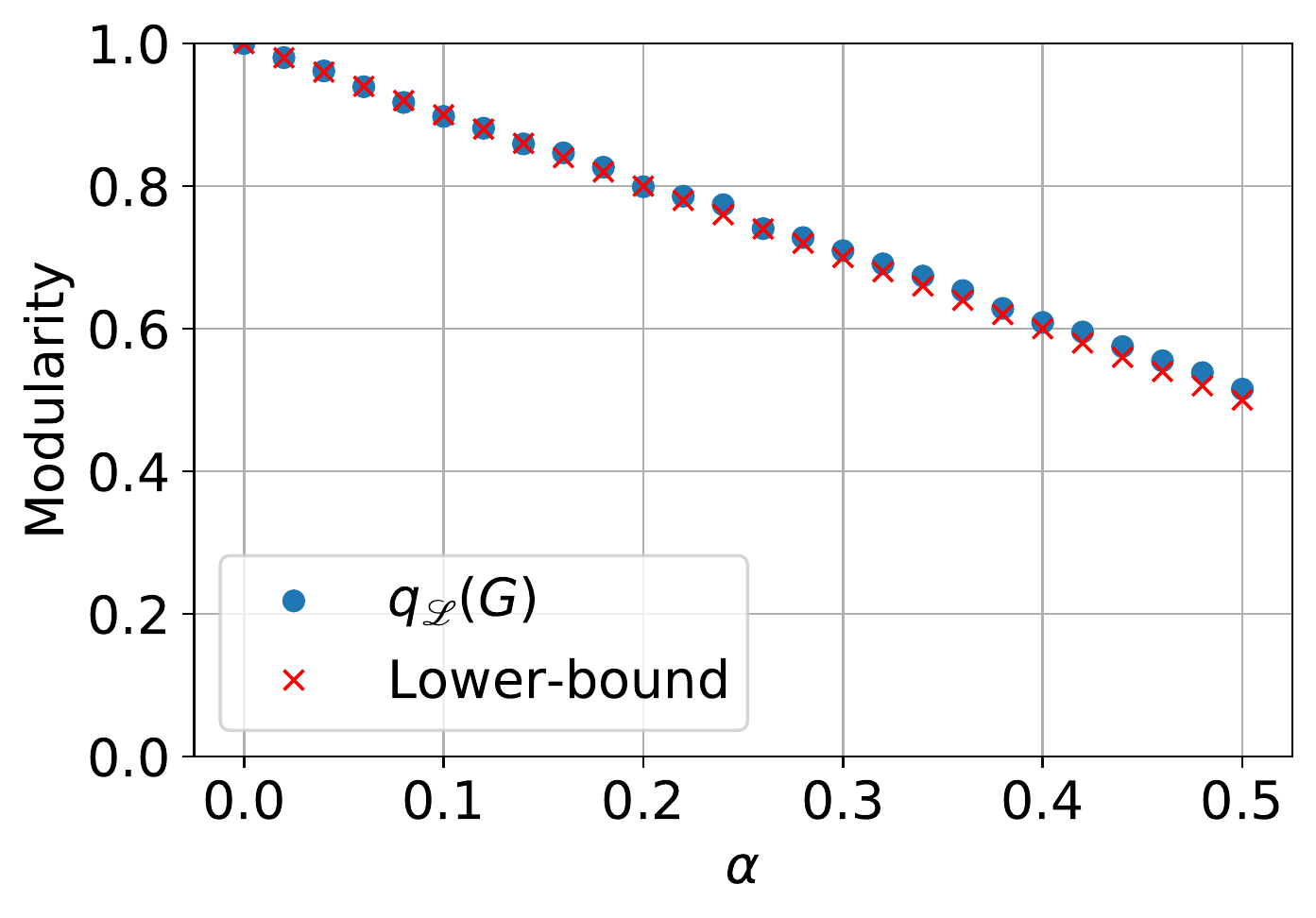}
		\vspace{4pt}
		\centering
		\vspace{-10pt}
		{\small $20$-uniform $G$}
	\end{minipage}
	\caption{Lower bound from Lemma \ref{lemma:mod_hyper_gen} in comparison with the modularity score obtained by Leiden algorithm on simulated uniform hypergraphs $G$.}
	\vspace{-15pt}
	\label{fig:lower_bound}
\end{figure}

 To get the approximation of modularity of simulated hypergraphs we used Leiden procedure \cite{Leiden} - a popular community detection algorithm for large networks. Calculating modularity is NP-hard \cite{modularity_NPhard}. Leiden is nowadays one of the best heuristics trying to find a partition maximising modularity. Therefore we treat its outcome partition as the one whose modularity score is quite precise approximation of the modularity of graphs in question. Every presented modularity score (using Definition \ref{def:mod_hypergraphs}) refers to a partition returned by Leiden algorithm ran on the flattened hypergraph (i.e., a graph obtained from a hypergraph by exchanging hyperedges with cliques). We did not manage to run Leiden-like algorithm directly for hypergraphs due to their big scale and our technical limitations.

\begin{wrapfigure}{R}{0.5\textwidth}
	\vspace{-25pt}
	\centering{
		\includegraphics[width = 0.45 \textwidth]{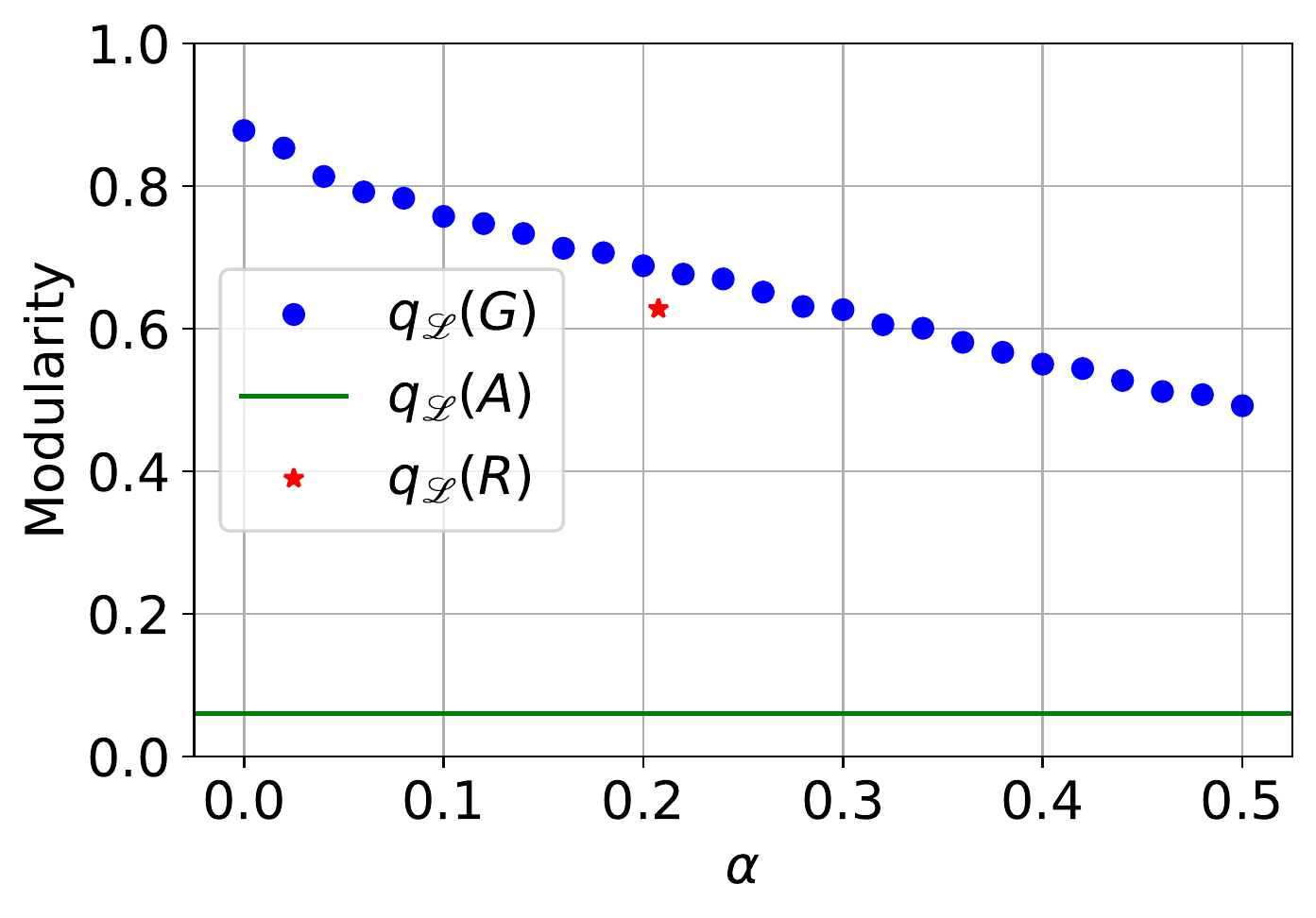}
		\vspace{-15pt}
		\caption{{\small Comparison of modularity between our model $G$, Avin et al. hypergraph $A$ and real co-authorship hypergraph $R$.}} \label{fig:mod_real}
	}
	\vspace{-20pt}
\end{wrapfigure}

Fig.~\ref{fig:lower_bound} shows the lower bound from Lem.~\ref{lemma:mod_hyper_gen} in comparison with the modularity of $2-$ and $20$-uniform hypergraph $G(G_0,p,M,X,P,\gamma)$ on $10^4$ vertices, where $M$ is uniform and matrix $P$ has values $(1-\alpha)/47$ ($47$ is the number of communities also in $R$) on the diagonal and the rest of probability mass spread uniformly over remaining entries.  
As we expected - the theoretical bound almost overlapped with the value of modularity in this case.

To build a real-life co-authorship hypergraph $R$ we used data downloaded from the citation database Scopus~\cite{scopus}. 
We have considered articles across all disciplines from the years 1990-2018 with at least one French co-author. Obtained hypergraph consisted of $\approx 2.2 \cdot 10^6$ nodes (authors) and $\approx 3.9 \cdot 10^6$ hyperedges (articles). Next, we implemented our model $G$ and Avin's et al. model $A$ using the parameters (distribution of hyperedges cardinalities, vector $M$, matrix $P$) gathered from hypergraph $R$. Figure~\ref{fig:mod_real} compares modularities of $G$, $A$, and $R$. For $R$ the value $\alpha$ equals $0.21$. Then the modularity of our model is around $0.69$ which is very close to the modularity of $R$ ($\approx 0.63$). The modularity of $A$, as $A$ does not feature communities, is very low ($\approx 0.06$). Figure~\ref{fig:mod_real} shows also how the modularity of $G$ changes with $\alpha$ and one may notice that it stays at reasonably high level even when the amount of the noise in the network grows.
 

\vspace{-0.1cm}

	\section{Conclusion and Further Work} \label{sec:summary}



We have proved theoretically and confirmed experimentally that our model exhibits high modularity, which is rare for known preferential attachment graphs and was not present in hypergraph models so far. While our model has many parameters and may seem complicated, this general formulation allowed us to unify many results known so far. Moreover, 
it can be easily transformed into much simpler model (e.g., by setting some arguments trivially to $0$, repeating the same distributions for hyperedges cardinalities...).

It is commonly known that many real networks present an exponential cut-off in their degree distribution. One possible reason to explain this phenomenon is that nodes eventually become inactive in the network. As further work, we will include 
this process in our model. The other direction of future study is making the preferential attachment depending not only on the degrees of the vertices but also on their own characteristic (generally called fitness). 
	
	%
	%
	\newpage
	\bibliographystyle{splncs04}
	\bibliography{hyper_mod}
	
	\newpage
	\section*{Appendix}
	
\subsection*{Degree distribution of  $\mathbf{H(H_0,p,Y,X,m,\gamma)}$}

The number of vertices in $H_t$ is a random variable following a binomial distribution. Since $|V_0|=1$
we have $|V_t| \sim B(t,p_v+p_{ve}) + 1$. Since $|E_0|=1$, the number of hyperedges in $H_t$ is a random variable satisfying $|E_t| \sim m  B(t,p_{ve}+p_e) + 1$.

Before we prove Theorem \ref{thm:hyper_plaw} we discuss briefly the concentration of random variables $|V_t|$ (the number of vertices at time $t$), $D_t$ (the sum of degrees at time $t$) and $W_t = D_{t} + \gamma |V_t|$. We also state two technical lemmas that will be helpful later on. 

\begin{lemma}[Chernoff bounds, \cite{MU_book}, Chapter 4.2] \label{lemma:Chernoff}
	Let $Z_1,Z_2,\ldots,Z_t$ be independent indicator random variables with $\Pr[Z_i=1]=p_i$ and $\Pr[Z_i=0]=1-p_i$. Let $\delta>0$, $Z = \sum_{i=1}^{t} Z_i$ and $\mu = \E[Z] = \sum_{i=1}^{t} p_i$. Then
	\[
	\Pr[|Z-\mu| \geqslant \delta\mu] \leq 2 e^{-\mu\delta^2/3}.
	\]
\end{lemma}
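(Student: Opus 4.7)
The plan is to prove the two-sided tail bound by the classical Chernoff--Cramér exponential moment method, treating the upper and lower tails separately and then combining them by a union bound. The idea is that for any $\lambda>0$, Markov's inequality applied to the nonnegative random variable $e^{\lambda Z}$ gives
\[
\Pr[Z \geqslant (1+\delta)\mu] \;=\; \Pr\!\left[e^{\lambda Z} \geqslant e^{\lambda(1+\delta)\mu}\right] \;\leqslant\; \frac{\E[e^{\lambda Z}]}{e^{\lambda(1+\delta)\mu}},
\]
and we then have the freedom to optimize $\lambda$.

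First I would bound the moment generating function. Using independence of the $Z_i$'s, $\E[e^{\lambda Z}] = \prod_{i=1}^t \E[e^{\lambda Z_i}]$, and for each indicator variable $\E[e^{\lambda Z_i}] = 1 + p_i(e^{\lambda}-1) \leqslant \exp\bigl(p_i(e^{\lambda}-1)\bigr)$ by the elementary inequality $1+x \leqslant e^x$. Multiplying gives $\E[e^{\lambda Z}] \leqslant \exp\bigl(\mu(e^{\lambda}-1)\bigr)$. Substituting into Markov and choosing the minimising $\lambda = \ln(1+\delta)$ yields the raw upper-tail bound
\[
\Pr[Z \geqslant (1+\delta)\mu] \;\leqslant\; \left(\frac{e^{\delta}}{(1+\delta)^{1+\delta}}\right)^{\mu}.
\]
For the lower tail, the symmetric computation with $\lambda<0$ produces $\Pr[Z \leqslant (1-\delta)\mu] \leqslant \bigl(e^{-\delta}/(1-\delta)^{1-\delta}\bigr)^{\mu}$ for $0<\delta<1$ (and the event is empty for $\delta \geqslant 1$, so that range is trivial).

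Next I would reduce both bounds to the clean exponential form $e^{-\mu\delta^2/3}$. This requires the analytic inequalities $e^{\delta}/(1+\delta)^{1+\delta} \leqslant e^{-\delta^2/3}$ for all $\delta>0$ and $e^{-\delta}/(1-\delta)^{1-\delta} \leqslant e^{-\delta^2/3}$ for $0<\delta<1$. Both are standard and follow from expanding $(1\pm\delta)\ln(1\pm\delta)$ in a Taylor series and bounding the remainder; the upper-tail case splits naturally into $0<\delta \leqslant 1$ and $\delta>1$, the latter being strictly easier because the denominator $(1+\delta)^{1+\delta}$ grows super-exponentially. A union bound on the two one-sided events then gives
\[
\Pr[\lvert Z-\mu\rvert \geqslant \delta\mu] \;\leqslant\; 2\,e^{-\mu\delta^2/3}.
\]

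The main obstacle is purely analytic: verifying the scalar inequality $e^{\delta}/(1+\delta)^{1+\delta} \leqslant e^{-\delta^2/3}$ uniformly in $\delta>0$, equivalently $\delta - (1+\delta)\ln(1+\delta) + \delta^2/3 \leqslant 0$. I would handle this by checking that the left-hand side vanishes at $\delta=0$, computing its derivative, and showing the derivative is nonpositive on $(0,\infty)$ (using $\ln(1+\delta) \geqslant 2\delta/(2+\delta)$ or a direct convexity argument for large $\delta$). Everything else, in particular the MGF computation and the optimisation in $\lambda$, is mechanical once this analytic fact is in place.
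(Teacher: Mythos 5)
The paper does not actually prove this lemma: it is quoted from Mitzenmacher and Upfal, Chapter~4.2, so your exponential-moment argument is being compared against the textbook proof, which it follows in outline (bounding $\E[e^{\lambda Z}]\leqslant e^{\mu(e^{\lambda}-1)}$ via independence and $1+x\leqslant e^x$, applying Markov, optimising at $\lambda=\ln(1+\delta)$, and taking a union bound over the two tails). The MGF computation, the choice of $\lambda$, and the lower-tail reduction (where the constant $1/2$, hence $1/3$, is available) are all correct.

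There is, however, a genuine gap in the final analytic step, and it sits exactly where you wave it away. The inequality $e^{\delta}/(1+\delta)^{1+\delta}\leqslant e^{-\delta^2/3}$ does \emph{not} hold for all $\delta>0$: writing $f(\delta)=\delta+\delta^2/3-(1+\delta)\ln(1+\delta)$, one has $f(2)=2+4/3-3\ln 3\approx 0.037>0$, and $f(\delta)\to+\infty$ because $\delta^2/3$ eventually dominates $(1+\delta)\ln(1+\delta)$. So the claim that the case $\delta>1$ is ``strictly easier because the denominator grows super-exponentially'' is backwards: the target $e^{-\mu\delta^2/3}$ decays like a Gaussian in $\delta$, while the optimised Chernoff upper-tail bound only decays like $e^{-c\mu\,\delta\ln\delta}$, and the comparison fails for $\delta\gtrsim 1.8$. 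Indeed the two-sided bound with constant $1/3$ is false as stated for unrestricted $\delta$: take all $p_i=1/t$ so that $\mu=1$ and $Z$ is essentially Poisson$(1)$, and $\delta=10$; then $\Pr[Z\geqslant 11]\approx e^{-1}/11!\approx 10^{-8}$, whereas $2e^{-100/3}\approx 10^{-14}$. The correct statement, which is what Mitzenmacher--Upfal prove and what the paper should be read as invoking, restricts to $0<\delta<1$; your proof goes through verbatim on that range since $f\leqslant 0$ there. The restriction costs nothing downstream, because in Corollary~\ref{cor:Vt_concentr} the lemma is applied with $\delta=\sqrt{9\ln t/((p_v+p_{ve})t)}\to 0$.
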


\begin{corollary} \label{cor:Vt_concentr}
	Since $|V_t| \sim B(t,p_v+p_{ve}) + 1$ setting $\delta = \sqrt{\frac{9 \ln{t}}{(p_v+p_{ve})t}}$ in Lemma \ref{lemma:Chernoff} 
	we get
	\[
	\Pr[||V_t|-\E[]|V_t|]| \geqslant \sqrt{9 (p_v+p_{ve}) t \ln{t}}] \leq 2/t^3.
	\]
\end{corollary}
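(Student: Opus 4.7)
The plan is to apply Lemma \ref{lemma:Chernoff} directly, since $|V_t|-1$ is exactly a sum of $t$ i.i.d.\ Bernoulli indicators with success probability $p_v+p_{ve}$ (one indicator per time step, equal to $1$ iff a new vertex is added at that step, which happens under the events of probability $p_v$ or $p_{ve}$).

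First, I would set $Z=|V_t|-1$, so that $\mu=\E[Z]=(p_v+p_{ve})t$ and $|V_t|-\E[|V_t|]=Z-\mu$. Then I would plug the proposed value $\delta=\sqrt{9\ln t / ((p_v+p_{ve})t)}$ into the Chernoff bound. The two quantities to compute are the deviation $\delta\mu$ and the exponent $\mu\delta^2/3$. A direct calculation gives
\[
\delta\mu=(p_v+p_{ve})t\cdot\sqrt{\frac{9\ln t}{(p_v+p_{ve})t}}=\sqrt{9(p_v+p_{ve})t\ln t},
\]
which matches the desired threshold in the probability, and
\[
\frac{\mu\delta^2}{3}=\frac{(p_v+p_{ve})t}{3}\cdot\frac{9\ln t}{(p_v+p_{ve})t}=3\ln t,
\]
so that $2e^{-\mu\delta^2/3}=2e^{-3\ln t}=2/t^3$, yielding the claimed bound.

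There is essentially no obstacle here: the result is a one-line substitution into Lemma \ref{lemma:Chernoff}, the only thing to be careful about being the additive constant ``$+1$'' in $|V_t|\sim B(t,p_v+p_{ve})+1$, which cancels when one takes $|V_t|-\E[|V_t|]$ and therefore does not affect the concentration statement. The choice of $\delta$ is precisely the one that calibrates the exponent to $3\ln t$ so as to produce the $1/t^3$ tail used later in the paper (and which, by a union bound over $t$, would give almost sure concentration).
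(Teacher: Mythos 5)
Your proof is correct and follows exactly the route the paper intends: a direct substitution of the stated $\delta$ into the Chernoff bound of Lemma~\ref{lemma:Chernoff}, with $\mu=(p_v+p_{ve})t$, $\delta\mu=\sqrt{9(p_v+p_{ve})t\ln t}$ and exponent $\mu\delta^2/3=3\ln t$ yielding the $2/t^3$ tail. The observation that the additive $+1$ cancels in $|V_t|-\E[|V_t|]$ is the right (and only) point of care.
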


\begin{lemma} \label{lemma:Nkt_limit}
	If $\lim_{t \rightarrow \infty} \frac{\E[N_{k,t}]}{t} \sim c k^{-\beta}$ for some positive constant $c$ then
	\[
	\lim_{t \rightarrow \infty} \E\left[\frac{N_{k,t}}{|V_t|}\right] \sim \frac{c}{p_v+p_{ve}} k^{-\beta}.
	\]
	\textnormal{(}Here ``$\sim$'' refers to the limit by $k \rightarrow \infty$.\textnormal{)}
\end{lemma}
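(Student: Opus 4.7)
The plan is to exploit the concentration of $|V_t|$ around its mean $(p_v+p_{ve})t+1$ given by Corollary \ref{cor:Vt_concentr}, which effectively lets us replace the random denominator by a deterministic one at negligible cost. Set $\sigma_t := \sqrt{9(p_v+p_{ve})t\ln t}$ and let $A_t = \{\,||V_t|-\E[|V_t|]|\leqslant \sigma_t\,\}$, so that $\Pr[\bar A_t]\leqslant 2/t^3$. First I would split
\[
\E\!\left[\frac{N_{k,t}}{|V_t|}\right] = \E\!\left[\frac{N_{k,t}}{|V_t|}\mathbf{1}_{A_t}\right] + \E\!\left[\frac{N_{k,t}}{|V_t|}\mathbf{1}_{\bar A_t}\right]
\]
and discard the second summand: since $N_{k,t}/|V_t|\leqslant 1$ pointwise, it is bounded by $\Pr[\bar A_t]=O(1/t^3)$, which is negligible compared with the expected order $\Theta(k^{-\beta})$ of the limit.

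Next, on the event $A_t$ I would use the deterministic sandwich
\[
\frac{N_{k,t}}{\E[|V_t|]+\sigma_t}\ \leqslant\ \frac{N_{k,t}}{|V_t|}\ \leqslant\ \frac{N_{k,t}}{\E[|V_t|]-\sigma_t},
\]
take expectations, and observe that $\E[N_{k,t}\mathbf{1}_{A_t}]=\E[N_{k,t}]-\E[N_{k,t}\mathbf{1}_{\bar A_t}]$ with $\E[N_{k,t}\mathbf{1}_{\bar A_t}]\leqslant(t+1)\cdot 2/t^3=O(1/t^2)$, using $N_{k,t}\leqslant |V_t|\leqslant t+1$. Because $\E[|V_t|]=(p_v+p_{ve})t+1$ and $\sigma_t=o(t)$, both the upper and the lower sandwich bounds are equal to $\frac{\E[N_{k,t}]}{(p_v+p_{ve})t}\bigl(1+o(1)\bigr)$ as $t\to\infty$, with the $o(1)$ depending on $t$ only.

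Combining the two pieces and invoking the hypothesis $\E[N_{k,t}]/t\sim c k^{-\beta}$ (in the order first $t\to\infty$ for fixed $k$, then $k\to\infty$) yields
\[
\lim_{t\to\infty}\E\!\left[\frac{N_{k,t}}{|V_t|}\right] \;=\; \frac{1}{p_v+p_{ve}}\lim_{t\to\infty}\frac{\E[N_{k,t}]}{t} \;\sim\; \frac{c}{p_v+p_{ve}}\,k^{-\beta},
\]
as required. There is no deep obstacle here; the only subtlety is that $N_{k,t}$ and $|V_t|$ are dependent random variables, so the expectation cannot be passed naively through the ratio. The Chernoff bound from Corollary \ref{cor:Vt_concentr} together with the crude pointwise bound $N_{k,t}\leqslant t+1$ on the low-probability complement dispenses with the need for any finer coupling argument.
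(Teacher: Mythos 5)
Your proposal is correct and follows essentially the same route as the paper's proof: both split the expectation on the Chernoff concentration event for $|V_t|$ from Corollary \ref{cor:Vt_concentr}, bound the contribution of the complement using $N_{k,t}/|V_t|\leqslant 1$ and $N_{k,t}\leqslant t+1$, and sandwich the random denominator between $\E[|V_t|]\pm\sigma_t$ on the good event. The only difference is cosmetic (indicator functions versus explicit sums over sample points $\omega$).
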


\begin{proof}
	Let $(\Omega, {\mathcal{F}}, \Pr)$ be the probability space on which random variables $N_{k,t}$ and $|V_t|$ are defined. Thus $N_{k,t}:\Omega \rightarrow \mathbb{R}$ and $|V_t|:\Omega \rightarrow \mathbb{R}$. Let $\Omega_1 \subseteq \Omega$ denote the set of all $\omega \in \Omega$ such that $|V_t|(\omega) \in (\E|V_t|-\sqrt{9 (p_v+p_{ve}) t \ln{t}}, \E|V_t|+\sqrt{9 (p_v+p_{ve}) t \ln{t}})$. By Corollary \ref{cor:Vt_concentr} we know that $\sum_{\omega \in \Omega \setminus \Omega_1} \Pr[\omega] \leqslant 2/t^3$. Using the fact that for each $\omega$ $\frac{N_{k,t}(\omega)}{|V_t|(\omega)} \leqslant 1$ we get
	\[
	\begin{split}
		\E\left[\frac{N_{k,t}}{|V_t|}\right] & = \sum_{\omega \in \Omega} \frac{N_{k,t}(\omega)}{|V_t|(\omega)}\Pr[\omega] = \sum_{\omega \in \Omega_1} \frac{N_{k,t}(\omega)}{|V_t|(\omega)}\Pr[\omega] + \sum_{\omega \in \Omega\setminus\Omega_1} \frac{N_{k,t}(\omega)}{|V_t|(\omega)}\Pr[\omega] \\
		& \leqslant \sum_{\omega \in \Omega} \frac{N_{k,t}(\omega)}{\E|V_t|-\sqrt{9 (p_v+p_{ve}) t \ln{t}}}\Pr[\omega] + \sum_{\omega \in \Omega\setminus\Omega_1} 1 \cdot \Pr[\omega] \\
		& \leqslant \frac{\E[N_{k,t}]}{\E|V_t|-\sqrt{9 (p_v+p_{ve}) t \ln{t}}} + 2/t^3 \sim \frac{\E[N_{k,t}]}{(p_v + p_{ve})t}.
	\end{split}
	\]
	On the other hand, since $N_{k,t} \leq t$,
	\[
	\begin{split}
		\E\left[\frac{N_{k,t}}{|V_t|}\right] & \geqslant \sum_{\omega \in \Omega_1} \frac{N_{k,t}(\omega)}{|V_t|(\omega)}\Pr[\omega] \geqslant \sum_{\omega \in \Omega_1} \frac{N_{k,t}(\omega)}{\E|V_t|+\sqrt{9 (p_v+p_{ve}) t \ln{t}}}\Pr[\omega] \\
		& = \frac{1}{\E|V_t|+\sqrt{9 (p_v+p_{ve}) t \ln{t}}} \left(\E[N_{k,t}] - \sum_{\omega \in \Omega\setminus\Omega_1} N_{k,t}(\omega) \Pr[\omega]\right) \\
		& \geqslant \frac{1}{\E|V_t|+\sqrt{9 (p_v+p_{ve}) t \ln{t}}} \left(\E[N_{k,t}] - \sum_{\omega \in \Omega\setminus\Omega_1} t \cdot \Pr[\omega]\right) \\
		& \geqslant \frac{\E[N_{k,t}]}{\E|V_t|+\sqrt{9 (p_v+p_{ve}) t \ln{t}}} - \frac{t \cdot 2/t^3}{\E|V_t|+\sqrt{9 (p_v+p_{ve}) t \ln{t}}} \\
		& \sim \frac{\E[N_{k,t}]}{(p_v + p_{ve})t}.
	\end{split}
	\]\qed
\end{proof}

\begin{lemma} [Hoeffding's inequality, \cite{Hoeff}] \label{lemma:hoeff}
	Let $Z_1, Z_2, \ldots, Z_t$ be independent random variables such that $\Pr[Z_i \in [a_i,b_i]] =1$. Let $\delta>0$ and $Z = \sum_{i=1}^{t}Z_i$. Then
	\[
	\Pr[|Z - \E[Z]| \geqslant \delta] \leqslant 2 \exp\left\{-\frac{2 \delta^2}{\sum_{i=1}^t(a_i-b_i)^2}\right\}.
	\]
\end{lemma}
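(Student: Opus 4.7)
The plan is to derive Hoeffding's inequality by the classical Chernoff/moment--generating--function technique. First I would center the variables by setting $W_i = Z_i - \E[Z_i]$, so that $W = \sum_{i=1}^t W_i$ has mean zero and each $W_i$ takes values in an interval $[a_i - \E[Z_i], b_i - \E[Z_i]]$ of the same length $b_i - a_i$. It then suffices to establish the one--sided tail bound $\Pr[W \geqslant \delta] \leqslant \exp\bigl(-2\delta^2 / \sum_{i=1}^t (b_i-a_i)^2\bigr)$; applying the same estimate to $-W$ yields the matching left tail, and a union bound produces the factor $2$ in the final statement.

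For the one--sided bound I would apply Markov's inequality to the exponential moment: for every $s > 0$,
\[
\Pr[W \geqslant \delta] \;=\; \Pr\bigl[e^{sW} \geqslant e^{s\delta}\bigr] \;\leqslant\; e^{-s\delta}\,\E[e^{sW}] \;=\; e^{-s\delta}\prod_{i=1}^t \E[e^{sW_i}],
\]
where independence of the $Z_i$ (and hence of the $W_i$) factors the expectation. The heart of the argument is then Hoeffding's lemma: for a mean--zero random variable $X$ supported on $[a,b]$, $\E[e^{sX}] \leqslant \exp\bigl(s^2(b-a)^2/8\bigr)$. I would prove this by convexity of $x \mapsto e^{sx}$: for $x \in [a,b]$,
\[
e^{sx} \;\leqslant\; \frac{b-x}{b-a}\,e^{sa} + \frac{x-a}{b-a}\,e^{sb};
\]
taking expectations and using $\E[X]=0$ expresses the right--hand side as $\exp(\varphi(s(b-a)))$ for a smooth function $\varphi$ on $[0,\infty)$ with $\varphi(0) = \varphi'(0) = 0$.

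Combining these ingredients yields
\[
\Pr[W \geqslant \delta] \;\leqslant\; \exp\!\Bigl(-s\delta + \tfrac{s^2}{8}\sum_{i=1}^t (b_i-a_i)^2\Bigr),
\]
and optimising over $s > 0$ by choosing $s = 4\delta/\sum_{i=1}^t(b_i-a_i)^2$ produces the advertised exponent $-2\delta^2/\sum_{i=1}^t(b_i-a_i)^2$. I expect the only mildly delicate step to be the bound $\varphi(u) \leqslant u^2/8$ in Hoeffding's lemma. The cleanest way is to observe that $\varphi''(u)$ can be written as the variance of a Bernoulli--type random variable on $\{a,b\}$ shifted to the unit interval, and is therefore bounded by $1/4$; integrating twice from the vanishing boundary conditions $\varphi(0)=\varphi'(0)=0$ gives the quadratic bound. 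Everything else is routine algebra and optimisation in $s$.
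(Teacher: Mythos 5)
Your proof is correct and is the standard Chernoff/moment-generating-function argument via Hoeffding's lemma; the paper itself gives no proof of this statement, importing it directly from Hoeffding's original work, where exactly this argument (centering, the convexity bound $\E[e^{sX}]\leqslant e^{s^2(b-a)^2/8}$, and optimisation over $s$) appears. All the details you flag, including the bound $\varphi''(u)\leqslant 1/4$ obtained by recognising $\varphi''$ as a variance of a two-point distribution, check out.
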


\begin{lemma} \label{lemma:Wt_concentr}
	Let $t>0$. Let $\E[Y_t] = \mu_0$, and $\E[X_t^{(i)}] = \mu_i$ for $i \in \{1,2,\ldots,r\}$. Moreover, let $2 \leqslant Y_t < t^{1/4}$ and $1 \leqslant X_t^{(i)} < t^{1/4}$ for $i \in \{1,2,\ldots,r\}$. Let $W_t = D_t + \gamma|V_t|$. Then
	\[
	\Pr[|W_t - \E[W_t]| \geq m t^{3/4}\sqrt{2\ln{t}}] = \mathcal{O}\left(\frac{1}{t^4}\right).
	\]
\end{lemma}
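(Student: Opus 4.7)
The plan is to write $W_t - W_0$ as a sum of independent bounded increments and apply Hoeffding's inequality (Lemma~\ref{lemma:hoeff}). Set $\Delta_s := W_s - W_{s-1} = (D_s - D_{s-1}) + \gamma(|V_s| - |V_{s-1}|)$ for $s = 1, \ldots, t$, so that $W_t - W_0 = \sum_{s=1}^{t} \Delta_s$. The first and most important step is to argue that although $H_s$ depends on the entire history $\mathcal{F}_{s-1}$, the scalar $\Delta_s$ does not: by inspection of the five possible events at step $s$, $\Delta_s$ takes exactly one of the values $0$ (do nothing), $\gamma$ (isolated vertex), $\gamma + m Y_s$ (add a vertex together with $m$ hyperedges of cardinality $Y_s$), or $m X_s^{(i)}$ (add $m$ hyperedges of cardinality $X_s^{(i)}$). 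In every case $\Delta_s$ is a function only of the event type drawn at step~$s$ and of the realisation of $Y_s$ or of one $X_s^{(i)}$; in particular it does not depend on \emph{which} vertices are selected in proportion to degrees. Since the event-type draws are i.i.d.\ and the sequences $Y$ and $X^{(i)}$ are independent across time by hypothesis, the increments $\Delta_1, \ldots, \Delta_t$ are mutually independent.

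Under $Y_s, X_s^{(i)} < t^{1/4}$ every increment is bounded, $0 \leqslant \Delta_s \leqslant \gamma + m t^{1/4}$, so I take $a_s = 0$ and $b_s = \gamma + m t^{1/4}$ in Lemma~\ref{lemma:hoeff}. With $\delta = m t^{3/4}\sqrt{2\ln t}$, since $W_0$ is a deterministic constant and centering around $\E[W_t]$ is the same as centering around $\E[W_t - W_0]$,
\[
\Pr\bigl[|W_t - \E[W_t]| \geqslant m t^{3/4}\sqrt{2\ln t}\bigr] \;\leqslant\; 2 \exp\!\left\{-\frac{4 m^2 t^{3/2} \ln t}{t(\gamma + m t^{1/4})^2}\right\}.
\]
Because $\gamma$ and $m$ are fixed constants, $t(\gamma + m t^{1/4})^2 = m^2 t^{3/2}\,(1 + O(t^{-1/4}))$, so the exponent equals $-4\ln t \cdot (1 - O(t^{-1/4})) = -4 \ln t + O(t^{-1/4}\ln t)$. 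The right-hand side therefore reduces to $2 t^{-4} \cdot (1 + o(1)) = \mathcal{O}(t^{-4})$, which is the claim.

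The only subtle step is the independence of the $\Delta_s$'s: one might initially fear that the preferential attachment rule couples successive increments through the evolving degree sequence. The point that makes the whole proof go through is that the particular functional $W_t = D_t + \gamma|V_t|$ is insensitive to which vertices are drawn at each step and records only the aggregate degree contribution of the new hyperedge(s); for this functional the history dependence washes out entirely. Once this is established the remainder is a routine Hoeffding calculation with the calibration $\delta = m t^{3/4}\sqrt{2\ln t}$ tuned to produce a tail of order $t^{-4}$.
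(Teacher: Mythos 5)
Your proof is correct and follows essentially the same route as the paper's: both decompose $W_t$ into a sum of $t$ independent increments determined solely by the event type and the cardinality draw at each step, and both apply Hoeffding's inequality with $\delta = m t^{3/4}\sqrt{2\ln t}$ and increment range $\mathcal{O}(t^{1/4})$ to obtain the $\mathcal{O}(t^{-4})$ tail. Your explicit justification that the increments are independent of the history (because $W_t$ records only aggregate degree mass, not which vertices are selected) is a point the paper leaves implicit.
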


\begin{proof}
	Our initial hypergraph consists of a single hyperedge of cardinality $1$ over a single vertex thus $W_0 = \gamma + 1$. For $t \geq 1$ we can obtain $W_t$ from $W_{t-1}$ by adding:
	\begin{enumerate}
		\item either $\gamma$ with probability $p_v$,
		\item or $\gamma + m Y_t$ with probability $p_{ve}$,
		\item or $m X_t^{(1)}$ with probability $p_e^{(1)}$,
		\item or $m X_t^{(2)}$ with probability $p_e^{(2)}$,
		\item or $\ldots$,
		\item or $m X_t^{(r)}$ with probability $p_e^{(r)}$,
		\item or $0$ with probability $1-p_v-p_{ve}-p_e$.
	\end{enumerate}
	Thus we can express $W_t$ as the sum of independent random variables $W_t = \gamma + 1 + Z_1 + Z_2 + \ldots + Z_t$, where $\E[Z_i] = \gamma \bar{V} + \bar{D}$ and $1 \leqslant Z_i \leqslant m t^{1/4} + \gamma$ for $i \in \{1,2,\ldots,t\}$ and $\bar{D}$ and $\bar{V}$ are defined as in Theorem \ref{thm:hyper_plaw}:
	\[
	\bar{V}  = p_v+p_{ve} \quad \textnormal{and} \quad \bar{D} = m(p_{ve}\mu_0 + p_e^{(1)}\mu_1 + \ldots + p_e^{(r)}\mu_r).
	\]
	Now, setting $\delta = m t^{3/4}\sqrt{2 \ln{t}}$ in Hoeffding's inequality (see Lemma \ref{lemma:hoeff}) we get
	\[
	\Pr[|W_t - \E[W_t]| \geqslant m t^{3/4}\sqrt{2 \ln{t}}] \leqslant 2 \exp\left\{-\frac{4 \cdot m^2 \cdot t^{6/4} \cdot \ln{t}}{(t+1) (m \cdot t^{1/4} + \gamma)^2}\right\} = \mathcal{O}\left(\frac{1}{t^4}\right).
	\]\qed
\end{proof}

\begin{lemma}[\cite{ChLu_book}, Chapter 3.3] \label{lemma:rec_seq}
	Let $\{a_t\}$ be a sequence satisfying the recursive relation
	\[
	a_{t+1} = \left(1-\frac{b_t}{t}\right)a_t + c_t
	\]
	where $b_t \xrightarrow{t \rightarrow \infty} b>0$ and $c_t \xrightarrow{t \rightarrow \infty} c$. Then the limit $\lim_{t \rightarrow \infty} \frac{a_t}{t}$ exists and
	\[
	\lim_{t \rightarrow \infty} \frac{a_t}{t} = \frac{c}{1+b}.
	\]
\end{lemma}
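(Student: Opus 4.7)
The plan is to unroll the recursion explicitly, show that the telescoping product behaves like a power of $t$, and then recognize the sum as a Riemann sum whose value is $c/(b+1)$.

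First, I would introduce the convenient shorthand $P_t^{s} = \prod_{u=s}^{t}\bigl(1-b_u/u\bigr)$ (with the convention $P_t^{t+1}=1$), and iterate the recursion from some large index $t_0$ onward to obtain
\[
a_{t+1} \;=\; a_{t_0}\,P_t^{t_0} \;+\; \sum_{s=t_0}^{t} c_s\,P_t^{s+1}.
\]
Here $t_0$ will be chosen large enough that $1-b_u/u \in (0,1)$ for $u \geq t_0$, so taking logarithms is safe. The bulk of the argument reduces to understanding the asymptotics of $P_t^{s+1}$.

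Next, I would estimate $\log P_t^{s+1} = \sum_{u=s+1}^{t}\log(1-b_u/u)$. Using $\log(1-x) = -x + O(x^2)$ and the fact that the $b_u$ are bounded (since they converge to $b$), one gets
\[
\log P_t^{s+1} \;=\; -\sum_{u=s+1}^{t}\frac{b_u}{u} \;+\; O(1),
\]
where the $O(1)$ comes from the convergent tail $\sum 1/u^2$. Because $b_u \to b$, for any $\varepsilon>0$ the partial sum $\sum_{u=s+1}^{t} b_u/u$ lies in $\bigl((b-\varepsilon)\log(t/s) - C,\,(b+\varepsilon)\log(t/s) + C\bigr)$ uniformly in $s \geq t_0(\varepsilon)$. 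Exponentiating, this yields $P_t^{s+1} = \Theta\bigl((s/t)^{b}\bigr)$, and more precisely $P_t^{s+1} = (1+o(1))(s/t)^b$ uniformly on the relevant range. In particular the initial term satisfies $a_{t_0}P_t^{t_0} = O(t^{-b}) = o(t)$ since $b>0$, so it does not contribute to the limit of $a_t/t$.

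For the main sum, split it at an index $t_1 = t_1(\varepsilon)$ beyond which $|c_s - c| < \varepsilon$. The tail from $t_0$ up to $t_1$ contributes $O(t_1 \cdot 1) = o(t)$, so
\[
\frac{1}{t}\sum_{s=t_0}^{t} c_s\,P_t^{s+1} \;=\; \frac{1+o(1)}{t}\sum_{s=t_1}^{t} c_s\,(s/t)^{b} \;+\; o(1).
\]
Replacing $c_s$ by $c$ (at the cost of an $\varepsilon$-error) and recognizing the right-hand side as a Riemann sum for $\int_0^1 x^{b}\,dx = 1/(b+1)$, one obtains
\[
\lim_{t\to\infty}\frac{a_t}{t} \;=\; \frac{c}{b+1},
\]
after letting $\varepsilon \to 0$. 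The main technical obstacle is the double uniformity in $s$ and $t$: the approximation $P_t^{s+1} \approx (s/t)^b$ must be controlled uniformly for $s$ ranging over most of $[t_0,t]$, and the Riemann-sum passage must be justified even though $b_u$ and $c_u$ are only assumed to converge, not to be monotone or to converge at any specified rate. Both difficulties are handled by the standard trick of choosing the cut-off $t_0$ (respectively $t_1$) depending on an auxiliary $\varepsilon$, bounding the pre-cutoff contribution crudely, and then sending $\varepsilon \to 0$ at the end.
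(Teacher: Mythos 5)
Your argument is correct in substance. Note first that the paper itself gives no proof of this lemma: it is imported verbatim from Chung and Lu's book (Chapter 3.3), so there is no in-paper argument to compare against. Your proof is the standard self-contained one via the discrete variation-of-constants formula $a_{t+1}=a_{t_0}P_t^{t_0}+\sum_{s=t_0}^{t}c_sP_t^{s+1}$ together with the product asymptotics $P_t^{s+1}\approx (s/t)^b$ and a Riemann sum for $\int_0^1 x^b\,dx=1/(b+1)$; all the pieces (the $o(t)$ initial term since $b>0$, the crude bound on the pre-cutoff block, the $\varepsilon$-sandwich) fit together. The one place where the write-up overstates is the claim that $P_t^{s+1}=(1+o(1))(s/t)^b$ uniformly: from $b_u\to b$ alone you only get two-sided bounds of the form $e^{O(1/s)}(s/t)^{b\pm\varepsilon}$ for $s\ge t_0(\varepsilon)$, and the exponent perturbation $\pm\varepsilon$ does not disappear for $s$ of order $t^{o(1)}$; but since you ultimately integrate $x^{b\pm\varepsilon}$ and let $\varepsilon\to 0$, this costs nothing, and you correctly flag it as the main technical point. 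For reference, the proof in the cited source is shorter and avoids product asymptotics entirely: one shows by induction that $a_t\le \frac{c+\varepsilon}{1+b-\varepsilon}\,t+M$ for all large $t$ and a suitable constant $M$ (using the algebraic identity $(1-\frac{b-\varepsilon}{t})\cdot\frac{c+\varepsilon}{1+b-\varepsilon}t+c+\varepsilon=\frac{c+\varepsilon}{1+b-\varepsilon}(t+1)$), and symmetrically for the lower bound, which sandwiches $\limsup$ and $\liminf$ of $a_t/t$ between $\frac{c\mp\varepsilon}{1+b\pm\varepsilon}$. Your route is longer but more informative, as it exhibits the full asymptotic profile of the solution rather than just the limit of $a_t/t$.
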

\noindent
Now we are ready to prove Theorem \ref{thm:hyper_plaw}.

\begin{proof}[Theorem \ref{thm:hyper_plaw}]
	Here we take a standard master equation approach that can be found e.g. in Chung and Lu book \cite{ChLu_book} about complex networks or Avin et al. paper \cite{ALP_hyper} on preferential attachment hypergraphs.
	
	Recall that $N_{k,t}$ denotes the number of vertices of degree $k$ at time $t$. We need to show that
	\[
	\lim_{t \rightarrow \infty} \E\left[\frac{N_{k,t}}{|V_t|}\right] \sim \tilde{c} k^{-\beta}
	\]
	for some constant $\tilde{c}$ and $\beta = 2+\frac{\gamma \bar{V} + m \cdot p_{ve}}{\bar{D} - m \cdot p_{ve}}$. However, by Lemma \ref{lemma:Nkt_limit} we know that it suffices to show that
	\[
	\lim_{t \rightarrow \infty} \frac{\E[N_{k,t}]}{t} \sim c k^{-\beta}
	\]
	for some constant $c$.
	
	Our initial hypergraph $H_0$ consists of a single hyperedge of cardinality $1$ over a single vertex thus we can write $N_{0,0} = 0$ and $N_{1,0}=1$. Now, to formulate a recurrent master equation we make the following observation for $t \geqslant 1$. The vertex $v$ has degree $k$ at time $t$ if either it had degree $k$ at time $t-1$ and was not chosen to any new hyperedge or it had degree $k-i$ at time $t-1$ and was chosen $i$ times to new hyperedges. Note that $i$ can be at most $\min\{k,mZ_t\}$, where $Z_t$ represents a random variable chosen among $Y_t, X_t^{(1)}, \ldots, X_t^{(r)}$ according to $(p_v,p_{ve},p_e^{(1)},\ldots,p_e^{(r)})$. Additionally, at each time step a vertex of degree $0$ may appear as the new one with probability $p_v$ and a vertex of degree $m$ may appear as the new one with probability $p_{ve}$. 
	Let ${\mathcal{F}}_t$ be the $\sigma$-algebra associated with the probability space at time $t$. Let $Q_{d,k,t}$ denote the probability that a specific vertex of degree $k$ was chosen $d$ times to be included in new hyperedges at time $t$ (this probability is expressed as a random variable since it depends on a specific realization of the process up to time $t-1$). Let also $W_t = D_{t}+\gamma|V_{t}|$. For $t \geqslant 1$ we get
	\[
	\E[N_{0,t} |{\mathcal{F}}_{t-1}] = p_v + N_{0,t-1} Q_{0,0,t}
	\]
	and when $k \geqslant 1$
	\[
	\begin{split}
		\E[N_{k,t} |{\mathcal{F}}_{t-1}] & = \delta_{k,m} p_{ve} + N_{k,t-1} Q_{0,k,t} + N_{k-1,t-1} Q_{1,k-1,t} \\
		& \quad + \sum_{i=2}^{\min\{k,mZ_t\}} N_{k-i,t-1} Q_{i,k-i,t},
	\end{split}
	\]
	where $\delta_{k,m}$ is the Kronecker delta. We have extracted the first two terms in the above sum since below we prove that these are the dominating terms.
	Taking expectation on both sides we obtain
	\begin{equation} \label{eq:master1}
		\E[N_{0,t}] = p_v + \E[N_{0,t-1} Q_{0,0,t}]
	\end{equation}
	and for $k \geqslant 1$
	\begin{equation} \label{eq:master2}
		\begin{split}
			\E[N_{k,t}] & = \delta_{k,m} p_{ve} + \E[N_{k,t-1} Q_{0,k,t}] + \E[N_{k-1,t-1} Q_{1,k-1,t}] \\
			& \quad + \sum_{i=2}^{\min\{k,mZ_t\}} \E[N_{k-i,t-1} Q_{i,k-i,t}].
		\end{split}
	\end{equation}
	Note that
	\[
	\begin{split}
		Q_{0,k,t} & = p_v + (1-p_v-p_{ve}-p_e) + p_{ve} \E\left[\left(1-\frac{k+\gamma}{W_{t-1}}\right)^{m (Y_t-1)}|{\mathcal{F}}_{t-1}\right]\\
		&  \quad + p_e^{(1)} \E \left[\left(1-\frac{k+\gamma}{W_{t-1}}\right)^{m X_t^{(1)}}|{\mathcal{F}}_{t-1}\right] + \ldots \\
		& \quad + p_e^{(r)} \E \left[\left(1-\frac{k+\gamma}{W_{t-1}}\right)^{m X_t^{(r)}}|{\mathcal{F}}_{t-1}\right]
	\end{split}
	\]
	while for $i \in \{1,2,\ldots,k\}$
	\[
	\begin{split}
		Q_{i,k-i,t} & = p_{ve} \E\left[\binom{m(Y_t-1)}{i}\left(\frac{k-i+\gamma}{W_{t-1}}\right)^i\left(1-\frac{k-i+\gamma}{W_{t-1}}\right)^{m(Y_t-1)-i} |{\mathcal{F}}_{t-1}\right] \\
		& \quad + p_e^{(1)}\E\left[ \binom{m X_t^{(1)}}{i}\left(\frac{k-i+\gamma}{W_{t-1}}\right)^i\left(1-\frac{k-i+\gamma}{W_{t-1}}\right)^{m X_t^{(1)}-i} |{\mathcal{F}}_{t-1} \right] + \ldots \\
		& \quad + p_e^{(r)} \E\left[\binom{m X_t^{(r)}}{i}\left(\frac{k-i+\gamma}{W_{t-1}}\right)^i\left(1-\frac{k-i+\gamma}{W_{t-1}}\right)^{m X_t^{(r)}-i} |{\mathcal{F}}_{t-1}\right].
	\end{split}
	\]
	Now, for any random variable $Z_t$ with constant expectation $\mu$, independent of the $\sigma$-algebra ${\mathcal{F}}_{t-1}$, and such that $1 \leq Z_t < t^{1/4}$, by Bernoulli's inequality we have
	\begin{equation} \label{eq:Z_lower}
		\begin{split}
			\E\left[\left(1-\frac{k+\gamma}{W_{t-1}}\right)^{m Z_t}|{\mathcal{F}}_{t-1}\right] & \geqslant \E\left[\left(1-\frac{mZ_t(k+\gamma)}{W_{t-1}}\right)|{\mathcal{F}}_{t-1}\right] \\
			& = 1 - \frac{m\mu(k+\gamma)}{W_{t-1}}.
		\end{split}
	\end{equation}
	On the other hand (using the fact that for $x \in [0,1]$ and $n \in \mathbb{N}$ we have $(1-x)^n \leq \frac{1}{1+nx}$)
	\begin{equation} \label{eq:Z_upper}
		\begin{split}
			\E\left[\left(1-\frac{k+\gamma}{W_{t-1}}\right)^{m Z_t}|{\mathcal{F}}_{t-1}\right] & \leqslant \E\left[\frac{1}{1+\frac{mZ_t(k+\gamma)}{W_{t-1}}}|{\mathcal{F}}_{t-1}\right] \\
			& = \E\left[1-\frac{mZ_t(k+\gamma)}{W_{t-1}+mZ(k+\gamma)}|{\mathcal{F}}_{t-1}\right] \\
			& \leqslant \E\left[1-\frac{mZ_t(k+\gamma)}{W_{t-1}} + \frac{(mZ_t(k+\gamma))^2}{W_{t-1}^2}|{\mathcal{F}}_{t-1}\right] \\
			& \leqslant 1 - \frac{m\mu(k+\gamma)}{W_{t-1}} + \frac{t^{1/2}(m(k+\gamma))^2}{W_{t-1}^2},
		\end{split}
	\end{equation}
	where the last inequality follows from the assumption $Z_t < t^{1/4}$. Now, let us consider the master equation (\ref{eq:master2}) for $\E[N_{k,t}]$  term by term. We start with the expected number of vertices that had degree $k$ at time $t-1$ and are still of degree $k$ at time $t$. By (\ref{eq:Z_lower}), Lemma \ref{lemma:Wt_concentr} and the fact that $N_{k,t-1} \leq t$ we get 
	\[
	\begin{split}
		\E[N_{k,t-1} Q_{0,k,t}] & \geqslant \E\left[N_{k,t-1} \left(1 - \frac{(k+\gamma) m (p_{ve}(\mu_0-1) + p_e^{(1)}\mu_1 + \ldots + p_e^{(r)}\mu_r)}{W_{t-1}} \right)\right] \\
		& = \E\left[N_{k,t-1} \left(1 - \frac{(k+\gamma) (\bar{D}-mp_{ve})}{W_{t-1}} \right)\right] \\
		& \geqslant \E[N_{k,t-1}] \left(1 - \frac{(k+\gamma) (\bar{D}-mp_{ve})}{\E[W_{t-1}]-mt^{3/4}\sqrt{2 \ln{t}}}\right) - t \cdot \frac{1}{t^4}.
	\end{split}
	\]
	To get the last inequality one needs to conduct calculations analogous to those from the proof of Lemma \ref{lemma:Nkt_limit}. By \ref{eq:Z_upper} and additionally using the fact that $W_{t-1} \geq 1$
	\[
	\begin{split}
		\E[N_{k,t-1} Q_{0,k,t}] & \leqslant \E\left[N_{k,t-1} \left(1 - \frac{(k+\gamma) (\bar{D}-mp_{ve})}{W_{t-1}}  + \frac{t^{1/2}(p_{ve}+p_e)(m(k+\gamma))^2}{W_{t-1}^2}\right)\right] \\
		& \leqslant \E[N_{k,t-1}] \left(1-\frac{(k+\gamma) (\bar{D}-mp_{ve})}{\E[W_{t-1}]+mt^{3/4}\sqrt{2 \ln{t}}}+\frac{t^{1/2}(p_{ve}+p_e)(m(k+\gamma))^2}{(\E[W_{t-1}]-mt^{3/4}\sqrt{2 \ln{t}})^2} \right) \\
		& \quad + \left(t + t^{3/2}(p_{ve}+p_e)(m(k+\gamma))^2\right) \cdot \frac{1}{t^4}.
	\end{split}
	\]
	Again, for the last inequality, proceed as in the proof of Lemma \ref{lemma:Nkt_limit}.
	Since $\E[W_{t-1}] = \bar{D}(t-1) + \gamma\bar{V}(t-1)$ and $\E[N_{k,t-1}] \leqslant t$, we obtain for fixed $k$
	\begin{equation} \label{eq:equ1}
		\E[N_{k,t-1} Q_{0,k,t}] = \E[N_{k,t-1}] \left(1 - \frac{(k+\gamma) (\bar{D}-mp_{ve})}{t(\bar{D} + \gamma\bar{V})+\mathcal{O}(t^{3/4}\sqrt{\ln{t}})}\right) + {\mathcal{O}}\left(\frac{1}{\sqrt{t}}\right).
	\end{equation}
	We treat $\E[N_{k-1,t-1} Q_{1,k-1,t}]$ similarly. On one hand we have
	\[
	\begin{split}
		Q_{1,k-1,t} & \geqslant  p_{ve} \E\left[m(Y_t-1)\frac{k-1+\gamma}{W_{t-1}}\left(1-\frac{mY_t(k-1+\gamma)}{W_{t-1}} \right)|{\mathcal{F}}_{t-1}\right] \\
		& \quad + p_{e}^{(1)} \E\left[mX_t^{(1)}\frac{k-1+\gamma}{W_{t-1}}\left(1-\frac{mX_t^{(1)}(k-1+\gamma)}{W_{t-1}} \right)|{\mathcal{F}}_{t-1}\right] + \ldots \\
		& \quad + p_{e}^{(r)} \E\left[mX_t^{(r)}\frac{k-1+\gamma}{W_{t-1}}\left(1-\frac{mX_t^{(r)}(k-1+\gamma)}{W_{t-1}} \right)|{\mathcal{F}}_{t-1}\right] \\
		& \geqslant p_{ve} \E\left[m(Y_t-1)\frac{k-1+\gamma}{W_{t-1}}|{\mathcal{F}}_{t-1}\right] - p_{ve}\E\left[\frac{Y_t^2(m(k-1+\gamma))^2}{W_{t-1}^2} |{\mathcal{F}}_{t-1}\right] + \ldots \\
		& \quad + p_{e}^{(r)} \E\left[m(X_t^{(r)})\frac{k-1+\gamma}{W_{t-1}}|{\mathcal{F}}_{t-1}\right] - p_{e}^{(r)}\E\left[\frac{(X_t^{(r)})^2(m(k-1+\gamma))^2}{W_{t-1}^2} |{\mathcal{F}}_{t-1}\right] \\
		& \geqslant \frac{p_{ve}m(\mu_0-1)(k-1+\gamma)}{W_{t-1}} - \frac{t^{1/2}p_{ve}(m(k-1+\gamma))^2}{W_{t-1}^2} + \ldots \\
		& \quad + \frac{p_{e}^{(r)}m\mu_r(k-1+\gamma)}{W_{t-1}} - \frac{t^{1/2}p_{e}^{(r)}(m(k-1+\gamma))^2}{W_{t-1}^2} \\
		& = \frac{(k-1+\gamma)(\bar{D}-mp_{ve})}{W_{t-1}} - \frac{t^{1/2}(p_{ve}+p_e)(m(k-1+\gamma))^2}{W_{t-1}^2}
	\end{split}
	\]
	(the last inequality follows from assumptions $Y_t < t^{1/4}$ and $X_t^{(i)}<t^{1/4}$), while on the other
	\[
	\begin{split}
		Q_{1,k-1,t} & \leqslant p_{ve} \E\left[m(Y_t-1)\frac{k-1+\gamma}{W_{t-1}}|{\mathcal{F}}_{t-1}\right] + \ldots + p_{e}^{(r)} \E\left[mX_t^{(r)}\frac{k-1+\gamma}{W_{t-1}}|{\mathcal{F}}_{t-1}\right] \\
		& \leqslant \frac{(k-1+\gamma)(\bar{D}-mp_{ve})}{W_{t-1}}. 
	\end{split}
	\]
	Again, by Lemma \ref{lemma:Wt_concentr}, the fact that $N_{k-1,t-1} \leqslant t$ and $N_{k-1,t-1}/W_{t-1}\leqslant 1$ for fixed $k$ we get 
	\begin{equation} \label{eq:equ2}
		\E[N_{k-1,t-1} Q_{1,k-1,t}] = \E[N_{k-1,t-1}] \left(\frac{(k-1+\gamma) (\bar{D}-mp_{ve})}{t(\bar{D} + \gamma\bar{V})+\mathcal{O}(t^{3/4}\sqrt{\ln{t}})}\right) + {\mathcal{O}\left(\frac{1}{\sqrt{t}}\right)}.
	\end{equation}
	The terms from equations (\ref{eq:equ1}) and (\ref{eq:equ2}) are those dominating in master equation (\ref{eq:master2}). For the sum of other terms we have the following upper bound when $k$ is fixed (the fourth inequality follows from upper bounding the sums by infinite geometric series and the asymptotics in the last line follows from Lemma \ref{lemma:Wt_concentr})
	\begin{equation} \label{eq:equ3}
		\begin{split}
			\sum_{i=2}^{\min\{k,m Z_t\}} & \E[N_{k-i,t-1}Q_{i,k-i,t}] \leqslant t \cdot \sum_{i=2}^k \E[Q_{i,k-i,t}] \\
			& \leqslant t \cdot \E\left[ \sum_{i=2}^k \left( p_{ve} \E\left[\binom{m(Y_t-1)}{i}\left(\frac{k-i+\gamma}{W_{t-1}}\right)^i|{\mathcal{F}}_{t-1}\right] \right.\right.\\
			& \quad \quad \quad \quad \quad \quad + p_e^{(1)}\E\left[ \binom{m X_t^{(1)}}{i}\left(\frac{k-i+\gamma}{W_{t-1}}\right)^i |{\mathcal{F}}_{t-1} \right] + \ldots \\
			& \left. \left. \quad \quad \quad \quad \quad \quad + p_e^{(r)} \E\left[\binom{m X_t^{(r)}}{i}\left(\frac{k-i+\gamma}{W_{t-1}}\right)^i |{\mathcal{F}}_{t-1}\right] \right)\right] \\
			& \leqslant t \cdot \E\left[ \E\left[ \sum_{i=2}^k \left( p_{ve} (mY_t)^i\left(\frac{k+\gamma}{W_{t-1}}\right)^i + \ldots \right.\right.\right.\\
			& \left. \left. \left. \quad \quad \quad \quad \quad \quad + p_e^{(r)}(m X_t^{(r)})^i\left(\frac{k+\gamma}{W_{t-1}}\right)^i  \right) | {\mathcal{F}}_{t-1}\right] \right] \\
			& \leqslant t \cdot \E\left[ \E\left[p_{ve} \frac{(m(k+\gamma)Y_t)^2}{W_{t-1}^2}\frac{1}{1-\frac{m(k+\gamma)Y_t}{W_{t-1}}} + \ldots \right. \right. \\
			& \left. \left. \quad \quad \quad \quad \quad \quad + p_{e}^{(r)} \frac{(m(k+\gamma)X_t^{(r)})^2}{W_{t-1}^2}\frac{1}{1-\frac{m(k+\gamma)X_t^{(r)}}{W_{t-1}}} | {\mathcal{F}}_{t-1}\right] \right] \\
			& \leqslant t \cdot \E \left[p_{ve} \frac{(m(k+\gamma)t^{1/4})^2}{W_{t-1}^2}\frac{1}{1-\frac{m(k+\gamma)t^{1/4}}{W_{t-1}}} + \ldots \right. \\
			& \left. \quad \quad \quad \quad \quad \quad + p_{e}^{(r)} \frac{(m(k+\gamma)t^{1/4})^2}{W_{t-1}^2}\frac{1}{1-\frac{m(k+\gamma)t^{1/4}}{W_{t-1}}}\right] \\
			& =  t \cdot \E \left[\frac{(p_{ve}+p_e)(m(k+\gamma))^2 t^{1/2}}{W_{t-1}^2}\frac{W_{t-1}}{W_{t-1}-m(k+\gamma)t^{1/4}} \right] \\
			& = (p_{ve}+p_e)(m(k+\gamma))^2 t^{3/2} \cdot \E\left[\frac{1}{W_{t-1}(W_{t-1}-m(k+\gamma)t^{1/4})}\right] \\
			& \sim (p_{ve}+p_e)(m(k+\gamma))^2 t^{3/2} \cdot \frac{1}{t^2} = \mathcal{O}\left(\frac{1}{\sqrt{t}}\right).
		\end{split}
	\end{equation}
	
	Plugging \ref{eq:equ1}, \ref{eq:equ2} and \ref{eq:equ3} into master equation (\ref{eq:master1}) and (\ref{eq:master2}) we obtain
	\begin{equation} \label{eq:N0} 
		\E[N_{0,t}] = \E[N_{0,t-1}] \left(1 - \frac{\gamma (\bar{D}-m p_{ve})}{t(\bar{D}+\gamma\bar{V}) + \mathcal{O}(t^{3/4}\sqrt{\ln{t}})} \right) + p_v + \mathcal{O}\left(\frac{1}{\sqrt{t}}\right)
	\end{equation}
	and
	\begin{equation} \label{eq:Nk}
		\begin{split}
			\E[N_{k,t}] & = \E[N_{k,t-1}] \left(1 - \frac{(k+\gamma) (\bar{D}-mp_{ve})}{t(\bar{D} + \gamma\bar{V})+\mathcal{O}(t^{3/4}\sqrt{\ln{t}})}\right)\\
			& \quad + \E[N_{k-1,t-1}] \left(\frac{(k-1+\gamma) (\bar{D}-mp_{ve})}{t(\bar{D} + \gamma\bar{V})+\mathcal{O}(t^{3/4}\sqrt{\ln{t}})}\right) + \delta_{k,m}p_{ve} + \mathcal{O}\left(\frac{1}{\sqrt{t}}\right).
		\end{split}
	\end{equation}

	For $k \geq 0$ by $L_k$ denote the limit
	\[
	L_k = \lim_{t \rightarrow \infty} \frac{\E[N_{k,t}]}{t}.
	\]
	First we prove that the limit $L_0$ exists. We apply Lemma \ref{lemma:rec_seq} to equation (\ref{eq:N0}) by setting 
	\[
	b_t = \frac{\gamma(\bar{D}-m p_{ve})}{\bar{D}+\gamma\bar{V} + \mathcal{O}(t^{3/4}\sqrt{\ln{t}}/t)} \quad \textnormal{and} \quad c_t = p_v + \mathcal{O}\left(\frac{1}{\sqrt{t}}\right).
	\]
	We get
	\[
	\lim_{t \rightarrow \infty} b_t = \frac{\gamma (\bar{D}-m p_{ve})}{\bar{D}+\gamma\bar{V}} \quad \quad \quad \textnormal{and} \quad \quad \quad \lim_{t \rightarrow \infty} c_t = p_v,
	\]
	therefore
	\[
	L_0 = \frac{p_v}{1+\frac{\gamma (\bar{D}-m p_{ve})}{\bar{D}+\gamma\bar{V}}} = \frac{p_v \frac{\bar{D}+\gamma\bar{V}}{\bar{D}-m p_{ve}}}{\frac{\bar{D}+\gamma\bar{V}}{\bar{D}-m p_{ve}} + \gamma}.
	\]
	Now, we assume that the limit $L_{k-1}$ exists and we will show by induction on $k$ that $L_k$ exists. Again, applying Lemma \ref{lemma:rec_seq} to equation (\ref{eq:Nk}) with
	\[
	b_t = \frac{(k+\gamma) (\bar{D}-m p_{ve})}{\bar{D}+\gamma\bar{V} + \mathcal{O}(t^{3/4}\sqrt{\ln{t}}/t)}
	\]
	and
	\[
	c_t = \frac{\E[N_{k-1,t-1}]}{t} \left(\frac{(k-1+\gamma) (\bar{D}-m p_{ve})}{\bar{D}+\gamma\bar{V} + \mathcal{O}(t^{3/4}\sqrt{\ln{t}}/t)}\right) + \delta_{k,m}p_{ve} + {\mathcal{O}}\left(\frac{1}{\sqrt{t}}\right)
	\]
	we get
	\[
	\lim_{t \rightarrow \infty} b_t = \frac{(k+\gamma) (\bar{D}-m p_{ve})}{\bar{D}+\gamma\bar{V}}
	\]
	and
	\[
	\lim_{t \rightarrow \infty} c_t = L_{k-1} \frac{(k-1+\gamma) (\bar{D}-m p_{ve})}{\bar{D}+\gamma\bar{V}} + \delta_{k,m}p_{ve},
	\]
	therefore
	\begin{equation} \label{eq:Nk_proportion}
		L_k = \frac{L_{k-1} \frac{(k-1+\gamma) (\bar{D}-m p_{ve})}{\bar{D}+\gamma\bar{V}} + \delta_{k,m}p_{ve}}{1+\frac{(k+\gamma) (\bar{D}-m p_{ve})}{\bar{D}+\gamma\bar{V}}} = \frac{L_{k-1}(k-1+\gamma)+ \delta_{k,m}p_{ve}\frac{\bar{D}+\gamma\bar{V}}{\bar{D}-m p_{ve}}}{k+\gamma+\frac{\bar{D}+\gamma\bar{V}}{\bar{D}-m p_{ve}}}.
	\end{equation}
	From now on, for simplicity of notation, we put $D = \frac{\bar{D}+\gamma\bar{V}}{\bar{D}-m p_{ve}}$ thus we have
	\[
	L_0 = \frac{p_v D}{\gamma + D} \quad \quad \textnormal{and} \quad \quad L_k = \frac{L_{k-1}(k-1+\gamma)+ \delta_{k,m}p_{ve}D}{k + \gamma + D}.
	\]
	When $k \in \{1,2,\ldots,m-1\}$, iterating over $k$ gives
	\[
	L_k = L_0 \cdot \prod_{\ell=1}^{k} \frac{\ell-1+\gamma}{\ell+\gamma+D} = \frac{p_v D}{\gamma+D} \prod_{\ell=1}^{k} \frac{\ell-1+\gamma}{\ell+\gamma+D}
	\]
	and when $k \geqslant m$
	\[
	\begin{split}
		L_k & = \frac{p_v D}{\gamma+D} \left( \prod_{\ell=1}^{k} \frac{\ell-1+\gamma}{\ell+\gamma+D} \right) + \frac{p_{ve} D}{m + \gamma + D} \left( \prod_{\ell=m+1}^{k} \frac{\ell-1+\gamma}{\ell+\gamma+D} \right)\\
		& = \left(\frac{p_v D}{\gamma+D} \left( \prod_{\ell=1}^{m} \frac{\ell-1+\gamma}{\ell+\gamma+D}\right) + \frac{p_{ve}D}{m + \gamma + D} \right) \left( \prod_{\ell=m+1}^{k} \frac{\ell-1+\gamma}{\ell+\gamma+D} \right)\\
		& = \left(\frac{p_v D}{\gamma+D} \frac{\Gamma(m+\gamma)}{\Gamma(\gamma)}\frac{\Gamma(\gamma + D +1)}{\Gamma(m+\gamma+D+1)} +\frac{p_{ve}D}{m + \gamma + D}\right) \\
		& \quad \quad \cdot \frac{\Gamma(m+\gamma+D+1)}{\Gamma(m+\gamma)}\frac{\Gamma(k+\gamma)}{\Gamma(k+\gamma+D+1)},
	\end{split}
	\]
	where $\Gamma(x)$ is the gamma function. Since $\lim_{k \rightarrow \infty} \frac{\Gamma(k)k^{\alpha}}{\Gamma(k+\alpha)} = 1$ for constant $\alpha \in \mathbb{R}$, we get
	\[
	\lim_{t \rightarrow \infty} \frac{\E[N_{k,t}]}{t} = L_k \sim c \cdot k^{-(1+D)}
	\]
	(``$\sim$'' refers to the limit by $k \rightarrow \infty$) for 
	\[
	c = p_v D\cdot\frac{\Gamma(\gamma+D)}{\Gamma(\gamma)} + p_{ve} D\cdot\frac{\Gamma(m+\gamma+D)}{\Gamma(m+\gamma)}.
	\]
	Hence, by Lemma \ref{lemma:Nkt_limit}, we obtain
	\[
	\lim_{t \rightarrow \infty} \E\left[\frac{N_{k,t}}{|V_t|}\right] \sim \frac{c}{p_v+p_{ve}} k^{-(1+D)}.
	\]
	We infer that the degree distribution of our hypergraph follows power-law with
	\[
	\beta = 1+D = 1+\frac{\bar{D}+\gamma\bar{V}}{\bar{D}-m p_{ve}} = 2 + \frac{\gamma\bar{V} + m p_{ve}}{\bar{D} - m p_{ve}}.
	\]\qed
\end{proof}

\subsection*{Degree distribution of $\mathbf{G(G_0,p,M,X,P,\gamma)}$}

The number of vertices in $G_t$ is a random variable satisfying $|V_t| \sim B(t,p) + r$, while for the number of hyperedges in $G_t$ we have $|E_t| \sim B(t,1-p) + r$. Note that since $|V_t|$ follows a binomial distribution, Lemma \ref{lemma:Nkt_limit} holds also in case of $G_t$ if we replace $p_v + p_{ve}$ by $p$.

Recall that $N_{k,t}$ stands for the number of vertices in $G_t$ of degree $k$. For $i \in \{1,2,\ldots,r\}$ by $N_{k,t}^{(i)}$ we denote the number of vertices of degree $k$ in $G_t$ belonging to community $C_t^{(i)}$. Thus $N_{k,t} = \sum_{i=1}^{r} N_{k,t}^{(i)}$.

\begin{lemma} \label{lemma:single_com}
	Consider a single community $C_{t}^{(j)}$ of a hypergraph $G_t$. Let $\E[X_t^{(i)}] = \mu_i$ and $1 \leqslant X_t^{(i)} < t^{1/4}$ for $i \in \{0,1,\ldots,r\}$. Then the degree distribution of vertices from $C_{t}^{(j)}$ follow a power-law with
	\[
	\beta_j = 2 + \frac{\gamma \bar{V}_j}{\bar{D}_j}
	\]
	where $\bar{V}_j$ is the expected number of vertices added to $C_{t}^{(j)}$ at a single time step and $\bar{D}_j$ is the average number of vertices from $C_{t}^{(j)}$ that increase their degree at a single time step, thus $\bar{V}_j = p m_j$ and $\bar{D}_j = (1-p) s_j\frac{\mu_1 + \ldots + \mu_r}{r}$, where $s_j$ is the probability that by creating a new hyperedge a community $j$ is chosen as the one sharing it \textnormal{(}we obtain $s_j$ from matrix $P$ - see remark below\textnormal{)}.
\end{lemma}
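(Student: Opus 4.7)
The plan is to recognize that, viewed on its own, the evolution of community $C_t^{(j)}$ is distributionally equivalent to an instance of the general preferential attachment hypergraph $H(H_0',\mathbf{p}',Y',X',m',\gamma)$ from Section \ref{sec:pa_hyper}, and then to invoke Theorem \ref{thm:hyper_plaw}. Thus the main task is a careful correspondence of parameters, after which the desired conclusion is immediate.

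At each step of $G$, I partition the single-step events from the standpoint of $C_t^{(j)}$ as follows. With probability $p m_j$ a new isolated vertex is added to $C_t^{(j)}$; for each $k \in \{1,\ldots,r\}$ with probability $(1-p) s_j / r$ a new hyperedge is created that involves $C_t^{(j)}$ and whose contribution from $C_t^{(j)}$ has cardinality drawn from $X_t^{(k)}$ (the factor $1/r$ comes from the uniform assignment of the $X_t^{(k)}$'s to the involved communities); and the remaining probability $p(1-m_j) + (1-p)(1-s_j)$ reduces to ``do nothing'' from the perspective of $C_t^{(j)}$. Setting $p_v' = p m_j$, $p_{ve}' = 0$, $p_e^{(k)'} = (1-p) s_j / r$ with cardinality distribution $X_t^{(k)}$, $m' = 1$, and retaining the same $\gamma$, the trajectory of $C_t^{(j)}$ is a genuine instance of model $H$. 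Substituting into the formulas of Theorem \ref{thm:hyper_plaw} yields $\bar{V}' = p m_j = \bar{V}_j$ and $\bar{D}' = (1-p) s_j (\mu_1 + \cdots + \mu_r)/r = \bar{D}_j$; since $p_{ve}' = 0$, the exponent collapses to $\beta_j = 2 + \gamma \bar{V}_j / \bar{D}_j$.

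The step I expect to be the most delicate is justifying that this reduction is genuinely exact. Vertices of $C_t^{(j)}$ are selected with probability proportional to $\deg_t(u) + \gamma$, where $\deg_t(u)$ is the degree in the full hypergraph $G_t$; but this is precisely the degree that governs preferential attachment restricted to community $j$, since cross-community hyperedges that involve $j$ do increment the degrees of selected vertices in $C_t^{(j)}$, and such increments are exactly accounted for by the ``hyperedge event'' in the embedded $H$ process. Independence, at each time step, of the choices made across different communities (the uniform assignment of the $X_t^{(k)}$'s and the independent proportional-to-degree selections within each chosen community) ensures that the single-community trajectory is a bona fide instance of $H$. The bound $X_t^{(i)} < t^{1/4}$ transfers verbatim, so the hypotheses of Theorem \ref{thm:hyper_plaw} are met and the claimed power-law exponent for $C_t^{(j)}$ follows.
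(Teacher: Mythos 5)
Your proposal is correct and follows essentially the same route as the paper: both view the single community as an embedded instance of $H(H_0,\mathbf{p},Y,X,m,\gamma)$ with $p_v = pm_j$, $p_{ve}=0$, $p_e^{(k)} = (1-p)s_j/r$, $m=1$, and then apply Theorem~\ref{thm:hyper_plaw}. Your extra remark justifying that the preferential-attachment degrees (which count cross-community hyperedges) are exactly those tracked by the embedded process is a worthwhile clarification the paper leaves implicit.
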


\begin{proof}
	Note that the community $C_{t+1}^{(j)}$ arises from community $C_{t}^{(j)}$ choosing at time $t$ only one of the following events according to $p$, $M$ and $P$.
	\begin{itemize}
		\item With probability $p m_j$: Add one new isolated vertex.
		\item With probability $\frac{(1-p)s_j}{r}$: Select $X_t^{(1)}$ vertices from $C_{t}^{(j)}$ in proportion to their degrees; these are vertices included in a newly created hyperedge, thus their degrees will increase.
		\item \ldots
		\item With probability $\frac{(1-p)s_j}{r}$: Select $X_t^{(r)}$ vertices from $C_{t}^{(j)}$ in proportion to their degrees; these are vertices included in a newly created hyperedge, thus their degrees will increase.
		\item With probability $1-(p m_j + (1-p)s_j)$: Do nothing.
	\end{itemize}
	Now, apply Theorem \ref{thm:hyper_plaw} with $p_v = p m_j$, $p_{ve}=0$, $p_e^{(1)} = p_e^{(2)} = \ldots = p_e^{(r)} = \frac{(1-p)s_j}{r}$ and $m=1$. We get that the degree distribution of vertices from $C_{t}^{(j)}$ follow a power-law with
	\[
	\beta_j = 2 + \frac{\gamma \bar{V}_j}{\bar{D}_j} = 2 + \frac{\gamma p m_j}{(1-p)s_j \frac{\mu_1 + \ldots + \mu_r}{r}}.
	\]\qed
\end{proof}

\begin{proof}[Theorem \ref{thm:degreed_G}]
	We need to prove that $\lim_{t \rightarrow \infty} \E\left[\frac{N_{k,t}}{|V_t|}\right] \sim \tilde{c} k^{-\beta}$ for some constant $\tilde{c}$ and $\beta$ as in the statement of theorem. By Lemma \ref{lemma:Nkt_limit} we know that it suffices to show $\lim_{t \rightarrow \infty} \frac{\E[N_{k,t}]}{t} \sim c k^{-\beta}$
	for some constant $c$. By Lemma \ref{lemma:Nkt_limit} and Lemma \ref{lemma:single_com} we write
	\[
	\begin{split}
		\lim_{t \rightarrow \infty} \frac{\E[N_{k,t}]}{t} & = \lim_{t \rightarrow \infty} \frac{\E[N_{k,t}^{(1)}]}{t} + \lim_{t \rightarrow \infty} \frac{\E[N_{k,t}^{(2)}]}{t} + \ldots + \lim_{t \rightarrow \infty} \frac{\E[N_{k,t}^{(r)}]}{t} \\
		& \sim c_1 k^{-\beta_1} + c_2 k^{-\beta_2} + \ldots + c_r k^{-\beta_r}
	\end{split}
	\]
	for some constants $c_1, \ldots, c_r$ and $\beta_j = 2 + \frac{\gamma \bar{V}_j}{\bar{D}_j}$. Thus $\lim_{t \rightarrow \infty} \frac{\E[N_{k,t}]}{t} \sim c k^{-\beta}$, where
	\[
	\beta  = \min_{j \in \{1,\ldots,r\}} \left\{\beta_j\right\}= 2 + \gamma \cdot \min_{j\in\{1,\ldots,r\}} \left\{ \frac{\bar{V}_j}{\bar{D}_j} \right\} = 2 + \frac{\gamma p }{(1-p)\frac{\mu_1 + \ldots + \mu_r}{r}} \cdot \min_{j\in\{1,\ldots,r\}} \left\{\frac{m_j}{s_j} \right\}.
	\]\qed
\end{proof}

In Figure \ref{fig:dd_real} we present log-log plots of a power-law distribution fitted to the degree distribution (DD) of a real-life co-authorship hypergraph $R$. $R$ is the same as in Section \ref{sec:modularity}. Thus it is built upon Scopus database, consists of $\approx 2.2 \cdot 10^6$ nodes (authors) and $\approx 3.9 \cdot 10^6$ hyperedges (articles). Left chart presents the degree distribution of the whole $R$ while the right one refers only to the biggest community of $R$ found by Leiden algorithm. One can observe the power-law behaviour in both cases.

\begin{figure}[ht]
	\begin{minipage}[b]{0.475\textwidth}
		\includegraphics[width=0.9\textwidth]{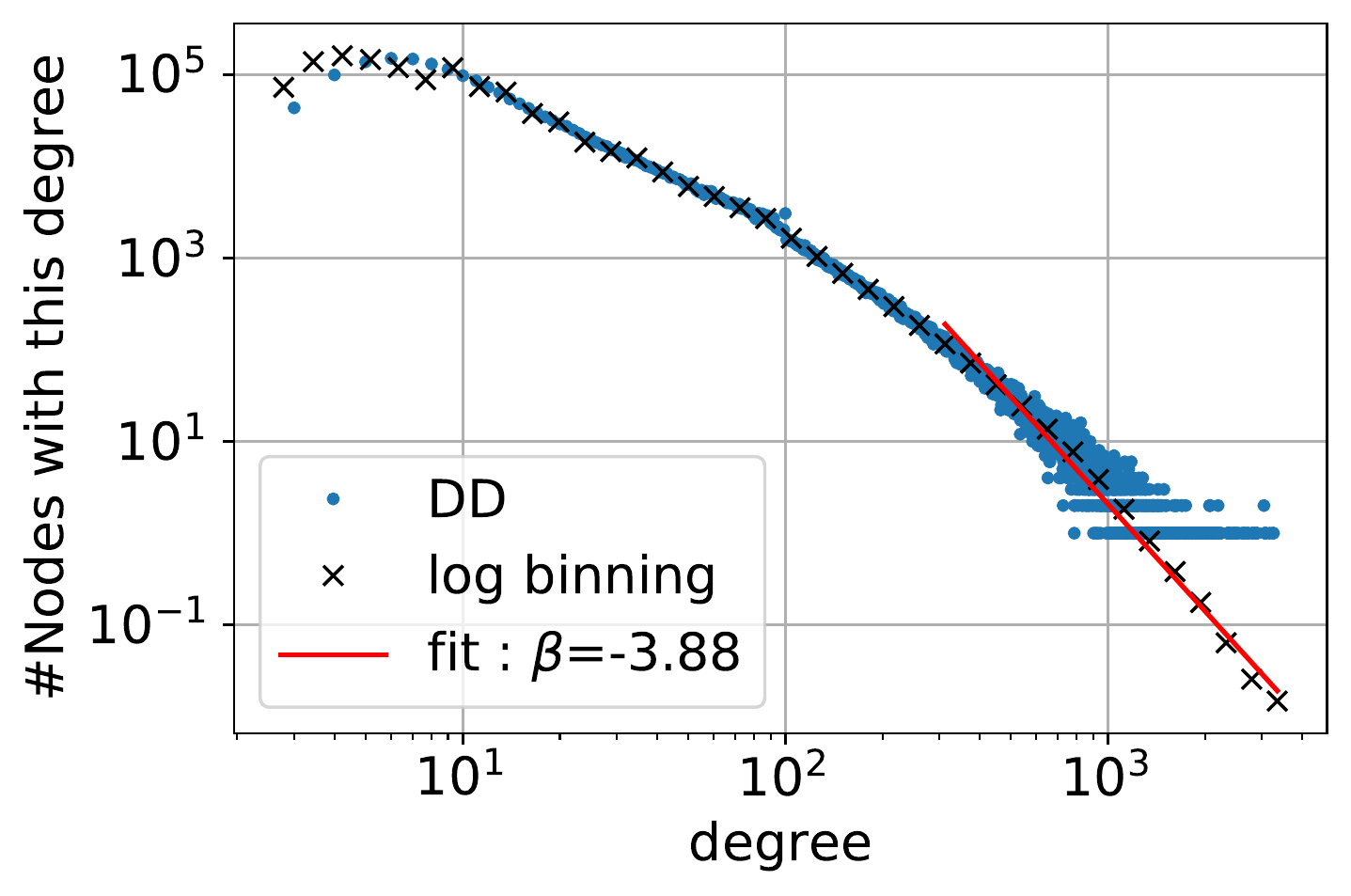}
		\vspace{4pt}
		\centering
		\vspace{-5pt}
		{\small Graph $R$}
	\end{minipage}
	\begin{minipage}[b]{0.475\textwidth}
		\includegraphics[width=0.9\textwidth]{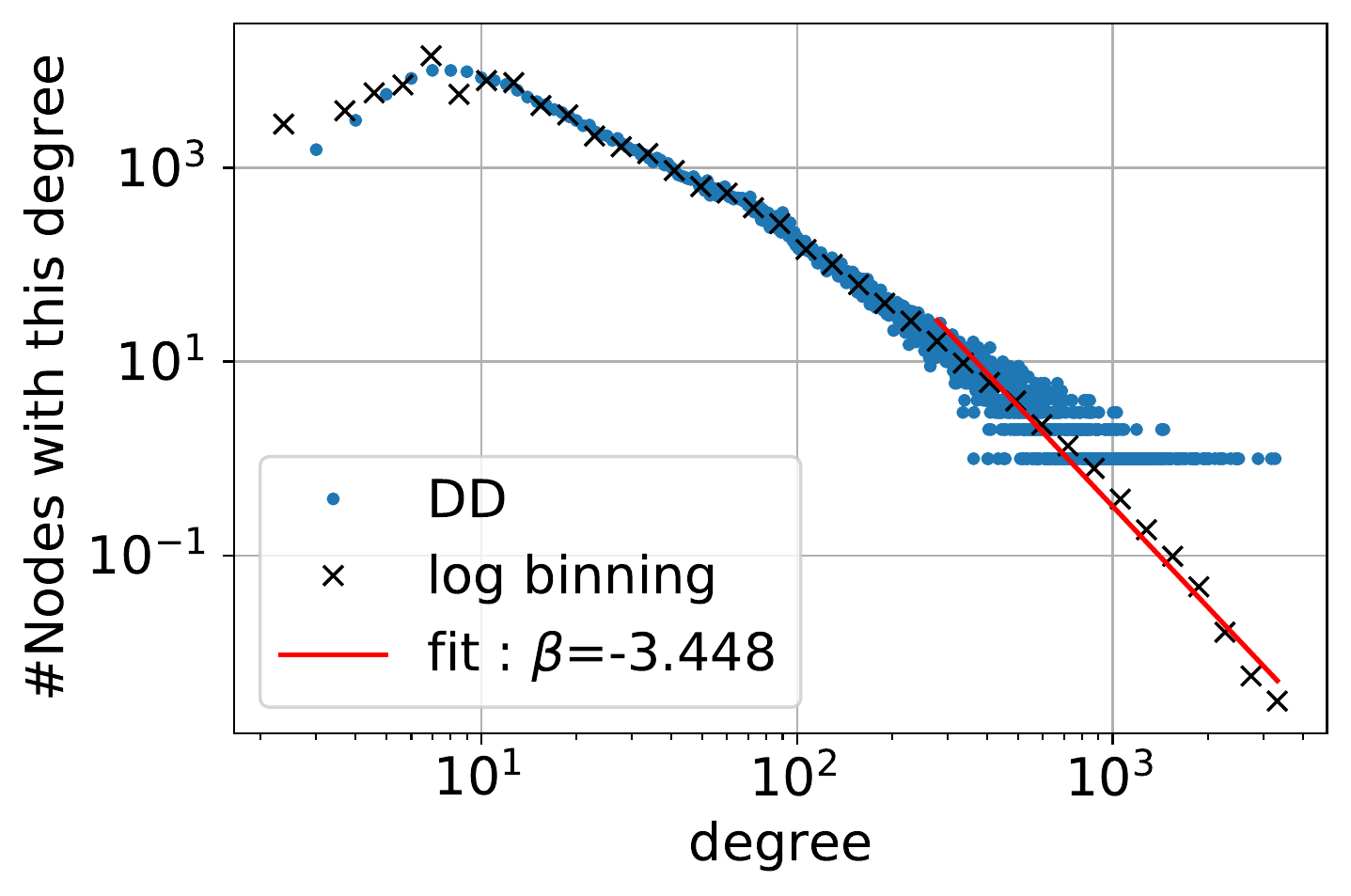}
		\vspace{4pt}
		\centering
		\vspace{-5pt}
		{\small The biggest community of $R$}
	\end{minipage}
	\caption{A power-law distribution fitted to the degree distributions.}
	\vspace{-15pt}
	\label{fig:dd_real}
\end{figure}

Let us also make one remark about the implementation of matrix $P$. 
\begin{remark}
	Observe that storing hyperedge probabilities in $d$-dimensional matrix $P$ we use much more space than we actually should. The same probabilities may repeat many times in $P$. E.g., when $d=2$ we get $2$-dimensional symmetric matrix $P$ such that $\sum_{i=1}^{r}\sum_{j=1}^i p_{ij}= 1$ and the the probability of creating hyperedge between two distinct communities $C^{(i)}$ and $C^{(j)}$ is in matrix $P$ doubled - as $p_{ij}$ and $p_{ji}$. If we allow for bigger hyperedges it may be repeated much more times. In fact we need to store at most $2^r-1$ different probabilities (one for each nonempty subset of the set of communities) while in $P$ we store $d^r$ values (in particular, if $d=r$ we store $r^r$ instead $2^r-1$ values). Nevertheless, for formal proofs this notation is convenient thus we use it at the same time underlining that implementation may be done much more space efficiently.
\end{remark}

\subsection*{Modularity}
\begin{proof}[Lemma \ref{lemma:mod_hyper_gen}]
	Let $\mathcal{C} = \{C^{(1)}, C^{(2)}, \ldots, C^{(r)}\}$.  Let also $q$ denote the probability of adding a new hyperedge in a single time step (hence $q=1-p$, referring to notation from Section \ref{sec:sbm}). Thus with high probability $|E| \sim t \cdot q$ (where `$\sim$' refers to the limit by $t\rightarrow \infty$). By Definition \ref{def:mod_hypergraphs} we write
	\[
	q^*(G) = \max_{\mathcal{A}} q_{\mathcal{A}}(G) \geqslant q_{\mathcal{C}}(G) = \sum_{i=1}^{r}\left(\frac{|E(C^{(i)})|}{|E|} - \sum_{\ell \geqslant 1}\frac{|E_{\ell}|}{|E|}\left(\frac{vol(C^{(i)})}{vol(V)}\right)^{\ell} \right).
	\]
	We obtain that with high probability
	\[
	q_{\mathcal{C}}(G) \sim \sum_{i=1}^{r}\left(\frac{t \cdot q \cdot p_{i}}{t \cdot q} - \sum_{\ell \geqslant 1}a_{\ell}\left(\frac{vol(C^{(i)})}{t \cdot q \cdot \delta}\right)^{\ell} \right).
	\]
	Note that if at a certain time step appears a hyperedge with all vertices contained in $C^{(i)}$, which happens with probability $q \cdot p_{i}$, it adds up at most $d$ to $vol(C^{(i)})$. If at a certain time step appears a hyperedge joining at least $2$ communities with at least one vertex in $C^{(i)}$, which happens with probability $q (s_i-p_{i})$, it adds up at most $d-1$ to $vol(C^{(i)})$. Thus we get that with high probability
	\[
	\begin{split}
		\lim_{t \rightarrow \infty} q^*(G) & \geqslant \sum_{i=1}^{r} p_i - \sum_{i=1}^{r} \sum_{\ell \geqslant 1} a_{\ell}\left(\frac{t \cdot q \cdot (d p_i + (d-1)(s_i-p_i))}{t \cdot q \cdot \delta}\right)^{\ell}\\
		& = \sum_{i=1}^{r} p_i - \sum_{i=1}^{r} \sum_{\ell \geqslant 1} a_{\ell}\left(\frac{(d-1)s_i+p_i}{\delta}\right)^{\ell}.
	\end{split}
	\]\qed
\end{proof}

\begin{proof}[Lemma \ref{lemma:mod_hyper_ab}]
	Let $\mathcal{C} = \{C^{(1)}, C^{(2)}, \ldots, C^{(r)}\}$ and for $i \in \{1,2,\ldots,r\}$ let $\tilde{s}_i$ be the probability that a randomly chosen hyperedge joins at least two communities and $C^{(i)}$ is one of them. Note that for $s_i$ defined as in Lemma \ref{lemma:mod_hyper_gen} (i.e., the probability that a randomly chosen hyperedge has at least one vertex in $C^{(i)}$) we get $s_i = \tilde{s}_i + p_i$. By Lemma \ref{lemma:mod_hyper_gen} we get that with high probability
	\begin{equation} \label{eq:main}
		\begin{split}
			\lim_{t \rightarrow \infty} q^*(G) & \geqslant \sum_{i=1}^{r} p_i - \sum_{i=1}^{r} \sum_{\ell \geqslant 1} a_{\ell}\left(\frac{(d-1)\tilde{s}_i+dp_i}{\delta}\right)^{\ell}\\
			& = (1-\alpha) - \sum_{\ell \geqslant 1} \frac{a_{\ell}}{\delta^{\ell}} \sum_{i=1}^{r} ((d-1)\tilde{s}_i+dp_i)^{\ell} \\
			& = (1-\alpha) - \frac{a_1}{\delta}\left((d-1) \sum_{i=1}^{r} \tilde{s}_i + d \sum_{i=1}^{r} p_i \right) - \sum_{\ell \geqslant 2} \frac{a_{\ell}}{\delta^{\ell}} \sum_{i=1}^{r} ((d-1)\tilde{s}_i+dp_i)^{\ell}.
		\end{split}
	\end{equation}
	Now, by $r_k$ denote the probability that a randomly chosen hyperedge joins exactly $k$ communities. Note that
	\begin{equation} \label{eq:si}
		\sum_{i=1}^{r} \tilde{s}_i = 2r_2 + 3r_3 + \ldots + dr_d \leqslant d(1 - \sum_{i=1}^{r} p_i) = d \alpha.
	\end{equation}
	Thus
	\begin{equation} \label{eq:a1}
		\begin{split}
			\frac{a_1}{\delta}\left((d-1) \sum_{i=1}^{r} \tilde{s}_i + d \sum_{i=1}^{r} p_i \right) & \leqslant \frac{a_1}{\delta}\left((d-1) d \alpha + d (1-\alpha) \right) \\
			& = a_1 \left(\frac{d}{\delta}\right)((d-2)\alpha+1).
		\end{split}
	\end{equation}
	Moreover,
	\[
	\begin{split}
		\sum_{\ell \geqslant 2} \frac{a_{\ell}}{\delta^{\ell}} \sum_{i=1}^{r} ((d-1)\tilde{s}_i & +dp_i)^{\ell} = \sum_{\ell \geqslant 2} \frac{a_{\ell}}{\delta^{\ell}} \sum_{i=1}^{r} \sum_{k=0}^{\ell}{\ell \choose k}((d-1)\tilde{s}_i)^k (dp_i)^{l-k} \\
		& = \sum_{\ell \geqslant 2} \frac{a_{\ell}}{\delta^{\ell}} \sum_{k=0}^{\ell} {\ell \choose k}(d-1)^{k} d^{l-k} \sum_{i=1}^{r} \tilde{s}_i^k p_i^{l-k} \\
		& \leqslant \sum_{\ell \geqslant 2} \frac{a_{\ell}}{\delta^{\ell}} \sum_{k=0}^{\ell} {\ell \choose k}(d-1)^{k} (d\beta )^{l-k} \sum_{i=1}^{r} \tilde{s}_i^k \\
		& = \sum_{\ell \geqslant 2} \frac{a_{\ell}}{\delta^{\ell}} \left( r (d \beta)^{\ell} + \sum_{k=1}^{\ell} {\ell \choose k}(d-1)^{k} (d\beta )^{l-k} \sum_{i=1}^{r} \tilde{s}_i^k \right) \\
		& \leqslant \sum_{\ell \geqslant 2} \frac{a_{\ell}}{\delta^{\ell}} \left( r (d \beta)^{\ell} + \sum_{k=1}^{\ell} {\ell \choose k}(d-1)^{k} (d\beta )^{l-k} \left(\sum_{i=1}^{r} \tilde{s}_i\right)^k \right).
	\end{split}
	\]
	Next, by (\ref{eq:si}) we get
	\begin{equation} \label{eq:suml2}
		\begin{split}
			\sum_{\ell \geqslant 2} \frac{a_{\ell}}{\delta^{\ell}} \sum_{i=1}^{r} ((d-1)\tilde{s}_i +dp_i)^{\ell} & \leqslant \sum_{\ell \geqslant 2} \frac{a_{\ell}}{\delta^{\ell}} \left( r (d \beta)^{\ell} + \sum_{k=1}^{\ell} {\ell \choose k}(d-1)^{k} (d\beta )^{l-k} (d \alpha)^k \right) \\
			& = \sum_{\ell \geqslant 2} \frac{a_{\ell}}{\delta^{\ell}} \left( (r-1) (d \beta)^{\ell} + \sum_{k=0}^{\ell} {\ell \choose k}((d-1)d \alpha)^{k} (d\beta )^{l-k} \right) \\
			& = \sum_{\ell \geqslant 2} \frac{a_{\ell}}{\delta^{\ell}} \left( (r-1) (d \beta)^{\ell} + ((d-1)d \alpha + d \beta)^{\ell} \right) \\
			& = \sum_{\ell \geqslant 2} a_{\ell}\left(\frac{d}{\delta}\right)^{\ell}\left((r-1)\beta^{\ell} + ((d-1)\alpha + \beta)^{\ell}\right).
		\end{split}
	\end{equation}
	Finally, plugging (\ref{eq:a1}) and (\ref{eq:suml2}) to (\ref{eq:main}) we get that with high probability
	\[
	\begin{split}
		\lim_{t \rightarrow \infty} & q^*(G) \geqslant  \\
		& \geqslant 1-\alpha - a_1 \left(\frac{d}{\delta}\right)((d-2)\alpha+1) - \sum_{\ell \geqslant 2} a_{\ell}\left(\frac{d}{\delta}\right)^{\ell}\left((r-1)\beta^{\ell} + ((d-1)\alpha + \beta)^{\ell}\right).
	\end{split}
	\]\qed
\end{proof}

\end{document}